\newcommand{\expect}{{\rm I\kern-.3em E}}
\DeclarePairedDelimiter{\ceil}{\lceil}{\rceil}
\DeclarePairedDelimiter{\floor}{\lfloor}{\rfloor}
\declaretheorem[
numberwithin = section,
]{definition}
\declaretheorem[
numberwithin=section
]{fact}
\begin{document}
\title{Parallelization techniques for quantum simulation of fermionic systems}
\author{Jacob Bringewatt}
\affiliation{Department of Physics, University of Maryland, College Park, Maryland 20742, USA}
\affiliation{Joint Center for Quantum Information and Computer Science, NIST/University of Maryland, College Park, Maryland 20742, USA}
\affiliation{Joint Quantum Institute, NIST/University of Maryland, College Park, Maryland 20742, USA}

\author{Zohreh Davoudi}
\affiliation{Department of Physics, University of Maryland, College Park, Maryland 20742, USA}
\affiliation{Maryland Center for Fundamental Physics, University of Maryland, College Park, Maryland 20742, USA}
\affiliation{Institute for Robust Quantum Simulation, University of Maryland, College Park, Maryland 20742, USA}

\begin{abstract}
\noindent
   Mapping fermionic operators to qubit operators is an essential step for simulating fermionic systems on a quantum computer. We investigate how the choice of such a mapping interacts with the underlying qubit connectivity of the quantum processor to enable (or impede) parallelization of the resulting Hamiltonian-simulation algorithm. It is shown that this problem can be mapped to a path coloring problem on a graph constructed from the particular choice of encoding fermions onto qubits and the fermionic interactions onto paths. The basic version of this problem is called the weak coloring problem. Taking into account the fine-grained details of the mapping yields what is called the strong coloring problem, which leads to improved parallelization performance. A variety of illustrative analytical and numerical examples are presented to demonstrate the amount of improvement for both weak and strong coloring-based parallelizations. Our results are particularly important for implementation on near-term quantum processors where minimizing circuit depth is necessary for algorithmic feasibility.
   \vspace{0.3 cm}   
   \noindent
   {Preprint Report No.} UMD-PP-022-06
\end{abstract}

\maketitle

\section{Introduction}
The ability to simulate complex fermionic systems is an important area of promise for quantum computers with applications ranging from quantum chemistry and condensed matter physics to nuclear and high-energy physics~\cite{georgescu2014quantum,bauer2020quantum,davoudi2022quantum,NSAC-QIS-2019-QuantumInformationScience}. Before performing such a simulation, however, one must map from the fermionic operators to operators acting on the Hilbert space of the qubits of the quantum computer.
A common approach to performing such a mapping is to use the Jordan-Wigner transformation~\cite{jordan1928paulische, ortiz2001quantum}, which encodes local fermionic operators on $N$ fermionic modes as non-local qubit operators on $N$ qubits. This non-locality, which manifests as strings of Pauli-$Z$ operators, is the price of obtaining the correct fermionic anti-commutation relations when using qubit operators. Unfortunately, even for physically-local fermionic interactions in higher than 1+1 dimensions, the length of these Pauli-$Z$ strings  can scale with the system size. This results in costly fermionic simulations~\cite{havlicek2017operator, clinton2021hamiltonian} on near-term quantum devices where the two-qubit entangling-gate (e.g., CNOT) count of an algorithm is expected to be the limiting factor. In particular, the Pauli weight of an operator $G$ (the number of qubits on which it acts non-trivially) is directly related to the number of two-qubit entangling gates needed to implement the unitary $U=\mathrm{exp}(-iG)$~\cite{mikeandike}. Nonetheless, recent progress has resulted in improvement in both entangling-gate count and in circuit depth when simulating given Jordan-Wigner-transformed fermionic Hamiltonians using product formulas, resorting to e.g., suitable term ordering and nesting strategies~\cite{hastings2014improving} or fermionic SWAP networks~\cite{babbush2017low, kivlichan2018quantum, cade2020strategies}.

A number of other mappings from fermions to qubits have been proposed in the literature~\cite{bravyi2002fermionic, verstraete2005mapping, whitfield2016local, steudtner2017lowering, steudtner2019quantum,setia2019superfast, jiang2019majorana, chen2020exact, chien2020custom, derby2021compact, kirby2021second, chiew2021optimal}. Many of these proposals aim to map local fermionic operators to local qubit operators, forming a class that is called local encodings in this paper. Local encodings trade operator non-locality for state non-locality as a vehicle for encoding fermionic anti-commutation relations in qubits. In particular, one finds that to preserve the appropriate anti-commutation relations via a local encoding, one must: a) increase the number of qubits, and b) restrict the state of the system to lie within some subspace of the Hilbert space---typically the logical codespace a (modified) toric code. At the price of these complications, one generally obtains lower gate counts required for simulation. The comparative analyses of various encodings given the Hamiltonian under study, the quantum resources to be optimized, and the architecture connectivity constitute an active area of research, see e.g., Refs.~\cite{cade2020strategies, derby2021compact, tranter2015b,tranter2018comparison, steudtner2018fermion}. 

In this work, we explore the potential for parallelization (that is the ability to simultaneously simulate several Hamiltonian terms) in local encodings, hence reducing the circuit depth of the simulation. In this context, product formula-based Hamiltonian-simulation algorithms based on Trotter-Suzuki decomposition of the time-evolution operator~\cite{suzuki1991general, wiebe2010higher, childs2021theory} are best suited for this analysis, nonetheless, other simulation algorithms~\cite{childs2012hamiltonian, berry2015simulating, low2017optimal, low2019hamiltonian, chakraborty2018power, gilyen2019quantum, kalev2021quantum, rajput2021hybridized} can also benefit from the parallelization strategy explored here. We consider the parallelization problem in connection to (a slightly abstracted version of) the underlying qubit architecture of the quantum computer, and emphasize, both analytically and numerically, connections between qubit architectures, fermionic-encoding locality, and parallelization. It is found that the problem of parallelization is equivalent to path coloring on a graph that represents the particular fermion-to-qubit mapping under consideration and the physical interactions among the fermionic modes. Consequently, heuristic classical algorithms can be used to inform efficient implementations of fermionic simulations on quantum hardware.

The particular graph-theoretic approach of this work is enabled by the strategy undertaken in Ref.~\cite{chien2020custom}, in which a general framework for local fermionic encodings of the sort described above is developed. In particular, it was demonstrated how to disconnect the interaction graph of the fermionic modes being simulated and the so-called \emph{system graph}, which determines the fermionic encoding in a flexible and qubit-architecture-aware manner. 
This separation enables the construction of the so-called \emph{custom fermionic codes}, which are a generalization of the Bravyi and Kiteav superfast encoding~\cite{bravyi2002fermionic, setia2019superfast}. The Jordan-Wigner transformation is a limiting case of such custom codes. The degree of non-locality can be reduced upon introducing further qubits and local connectivity on the system graph at will, and such choices amount to a range of custom encodings. The input to our parallelization problem is such a system graph, which fixes the encoding chosen to implement the interactions in the original fermionic Hamiltonian. The question investigated in this paper is to what extent the Hamiltonian simulation can be parallelized, and whether certain system graphs are best suited for maximal parallelizability.

The structure of this paper is as follows. Custom fermionic codes of Ref.~\cite{chien2020custom} is reviewed in Sec.~\ref{s:customfermioniccodes}. In Sec.~\ref{s:parallelization}, we demonstrate how the problem of parallelizing product formula-based Hamiltonian-simulation algorithms maps to path coloring on the system graph, and is, therefore, a NP-hard problem. This is named the weak coloring problem. By considering the fine-grained details of the fermion-to-qubit mapping below the abstraction level of the system graph, another path coloring problem is defined. This is called the strong coloring problem. Analytic results on the weak and strong coloring problems for a few illustrative systems are presented in Sec.~\ref{s:analyticresults}. A numerical approach to heuristically solving the weak and strong coloring problems is presented Sec.~\ref{s:numericalresults}. The numerical and analytic results are then compared for these system graphs, exhibiting consistency. We further numerically investigate the weak and strong coloring problems for a few realistic system graphs designed for current qubit architectures. It is found that by solving (or finding heuristics for) the more detailed strong coloring problem, one can often obtain significant gains in parallelizability compared to the more abstracted weak coloring problem. For most system graphs, these improvements are a constant factor---for instance, in the case of a star system graph or complete system graph, strong coloring asymptotically provides up to a factor of two improvement over weak coloring. However, we also construct an example for which the advantage gained grows linearly in the system size, which is the maximal possible gain from considering strong coloring. Both weak and strong coloring approaches provide large reductions in circuit depth compared to a naive sequential approach. Finally, the performance gains of strong coloring depend heavily on the choice of enumerating qubits in the encoding---a feature that is also taken advantage of in Ref.~\cite{chiew2021optimal} to provide optimal fermion-qubit mappings. Sec.~\ref{s:conclusion} includes a summary of the results and a discussion of possible directions for future study. The code generating the colored graph from the system graph and the corresponding physical interactions is provided as supplemental material~\cite{github}. 

\section{Custom Fermion-to-Qubit Mappings}\label{s:customfermioniccodes}
\subsection{Setup}
Consider a system of $N$ fermionic modes. A general
fermionic Hamiltonian may be written in the second quantization as 
\begin{equation}\label{eq:fermionicH}
    H=\sum_{uv} \kappa_{uv} a_u^\dagger a_v + \sum_{uvwx} \kappa_{uvwx} a_u^\dagger a_v^\dagger a_w a_x+ \cdots,
\end{equation}
where $a_u^\dagger$ and $a_u$ are the fermionic creation and annihilation operators on site $u$, respectively, satisfying the standard fermionic anticommutation relations
\begin{align}\label{eq:anticommrelations}
    \{a_u, a_v^\dagger\}&=\delta_{uv},\quad
    \{a_u, a_v\}=\{a_u^\dagger, a_v^\dagger\}=0,
\end{align}
and $\kappa_{uv}, \kappa_{uvwx}$ are some coefficients consistent with the Hermiticity of the Hamiltonian. It is convenient to consider a Majorana basis $\gamma, \gamma'$ for the fermionic operators as
\begin{align}\label{eq:majorana}
    \gamma_u&=a_u^\dagger+a_u, \quad  
    \nonumber\\
    \gamma_u'&=i(a_u^\dagger-a_u),
\end{align}
that clearly satisfy
\begin{align}
    \{\gamma_u,\gamma_v\} = 2\delta_{uv}, \quad \{\gamma_u',\gamma_v'&\} = 2\delta_{uv},\quad \{\gamma_u,\gamma_v'\} = 0.
\end{align}
To simulate a fermionic Hamiltonian on a quantum computer, one must first map from fermions to qubits while preserving these anti-commutation relations. The standard approach is the Jordan-Wigner mapping from $N$ fermionic modes to $N$ qubits,
\begin{align}\label{eq:JW}
    &a_u\rightarrow \prod_{v<u} Z_v(X_u+iY_u), \nonumber\\
    &a_u^\dagger \rightarrow \prod_{v<u} Z_v(X_u-iY_u),
\end{align}
where $X_u, Y_u$, and $Z_u$ are Pauli operators on the $u$-th qubit. The Jordan-Wigner transformation requires non-local qubit operations whose weight scales with the system size. These high Pauli-weight operators directly translate to increased gate counts for quantum simulation, and has stimulated various strategies to alleviate the simulation cost when resorting to a Jordan-Wigner mapping~\cite{hastings2014improving, babbush2017low, kivlichan2018quantum, cade2020strategies}.

\subsection{Local Fermion-to-Qubit Mappings}
There are many other approaches to mapping from fermions to qubits which aim to address the shortcomings of the Jordan-Wigner transformation. For instance, the Bravyi-Kitaev transformation encodes both occupation information (like the Jordan-Wigner transformation) and parity information in such a way that single fermionic operators act non-trivially on at most $\mathcal{O}(\log_2 N)$ qubits~\cite{bravyi2002fermionic}. This is in contrast to the linear scaling of the Pauli weight of qubit operators in system size for the Jordan-Wigner transformation. A simpler ternary-tree-based mapping from $N$ fermionic modes to $N$ qubits performs even better, leading to provably minimal Pauli weights in the average case. In this case, a single fermionic operator acts on $\ceil{\log_3(2N+1)}$ qubits~\cite{jiang2020optimal}. One can think of such a mapping as a generalization of the Jordan-Wigner transformation from a 1D chain to tree graphs~\cite{vlasov2019clifford}.

Fully local encodings---in the sense that local fermionic operators map to local qubit operators--- are possible with the addition of ancilla qubits. An important example is the Bravyi and Kitaev superfast encoding~\cite{bravyi2002fermionic} and its generalizations~\cite{setia2019superfast, chien2020custom}. A multitude of other local mappings have also been developed, often aimed at minimizing the qubits required, while still maintaining local, low Pauli-weight qubit operators \cite{ball2005fermions, whitfield2016local, steudtner2017lowering, steudtner2019quantum, chen2020exact,  derby2021compact, jiang2019majorana, derby2021compact2, kirby2021second}. These local mappings can generally be understood as generalizations of the toric code~\cite{kitaev2003fault}. In particular, all known local fermionic encodings are equivalent to the toric code on some lattice up to deformations by a constant-depth circuit of local Cliffords~\cite{chien2020custom, derby2021compact}. Fermionic-pair excitations in the local encoding arise as freely deformable strings of Pauli operators on the lattice, and the fermionic anti-commutation relations are enforced via restriction to a particular code subspace of the ancilla-extended Hilbert space. Equivalently, one could view the ancilla qubits as being used to couple to an auxillary gauge field~\cite{chen2020exact, chen2018exact}. A given local mapping, therefore, corresponds to a particular ``gauge theory'' and restricting the simulation to a particular subspace is equivalent to the choice of Gauss's law sector in the corresponding gauge theory. In either view, observe that local fermionic encodings of this sort require the preparation of a toric-code state. Therefore, local fermion-to-qubit mappings trade non-locality in the operators for extra qubits and non-locality in the states. To ensure the simulation proceeds in the allowed subspace of the Hilbert space---that is, that the local and non-local constraints are satisfied--- strategies similar to preserving (gauge) symmetries in lattice-gauge-theory simulations~\cite{stryker2019oracles,tran2021faster,lamm2020suppressing,halimeh2021gauge,van2021suppressing,stannigel2014constrained,kasper2020non,nguyen2021digital} can be explored.

\subsection{Custom Fermionic Codes}
This work focuses on a particular class of fermion-to-qubit encodings---the so-called custom fermionic encodings---developed by Chien and Whitfield~\cite{chien2020custom} as a generalization of the construction by Setia et al.~\cite{setia2018bravyi}. These mappings are, in turn, a generalization of the Bravyi and Kitaev superfast encoding~\cite{bravyi2002fermionic}. For our purposes, the essential feature of custom fermionic codes is that they allow for a variety of different encodings ranging from local to quasi-local to highly non-local ones. This flexibility permits trading resources like the number of qubits, qubit connectivity, and Pauli weight of simulated operators in an architecture-aware manner. In this paper, we will add the parallelizability of the resulting Hamiltonian-simulation algorithm to this list. 

Let us briefly review how to construct a custom fermionic code. One can introduce edge operators $A_{uv}$ and vertex operators $B_u$ which are defined as
\begin{align}
    A_{uv}&=-i\gamma_u\gamma_v, \label{eq:A}\\
    B_{u}&=-i\gamma_u\gamma'_u, \label{eq:B}.
\end{align}
These operators suffice to generate all parity-preserving fermionic operators in a Hamiltonian of the form Eq.~(\ref{eq:fermionicH}). Therefore, the Hamiltonian with $N$ fermionic modes
\begin{equation}\label{eq:H}
    H_K=\sum_{uv} \kappa_{uv} a_u^\dagger a_v,
\end{equation}
for some symmetric, real constants $\kappa_{uv}=\kappa_{vu}$, can be expressed as
\begin{align}\label{eq:alltoall_edge}
    H_K&=-\frac{i}{2}\sum_{u< v} \kappa_{uv} (A_{uv}B_v+B_uA_{uv}) -\frac{1}{2}\sum_{u} \kappa_{uu} B_u\\
    &=-\frac{i}{2}\sum_{u< v} \kappa_{uv}A_{uv}(B_v-B_u) -\frac{1}{2}\sum_{u} \kappa_{uu} B_u,
\end{align}
up to constant terms that can be ignored.

The interaction set $\mathcal{T}$ can now be defined as the set of all terms with non-zero coefficients in the re-expressed Hamiltonian. Furthermore, an interaction graph $\Gamma=\{V_\Gamma, E_\Gamma\}$ can be defined with vertices corresponding to each fermionic mode and an edge joining any pair of vertices $(u,v)$ such that the edge operator $A_{uv}B_v$ belongs to $\mathcal{T}$. For instance, for the Hamiltonian in Eq.~(\ref{eq:alltoall_edge}) with $\kappa_{uv}\neq 0$, the interaction set is
\begin{equation}\label{eq:interactionset}
    \mathcal{T}=\{A_{uv}B_v\}_{u\neq v}\cup \{B_u\}_{u\in \{1,\cdots N\}},
\end{equation}
and the corresponding interaction graph is a complete graph $K_N$ on $N$ vertices. In what follows, it is assumed without loss of generality that $\Gamma$ is connected, as if $\Gamma$ is disconnected, one is dealing with two physically independent systems, and can consider the connected case on each system separately. 

From Eq.~(\ref{eq:majorana}) and Eqs.~(\ref{eq:A}) and (\ref{eq:B}), one can show that the edge and vertex operators obey the following relations
\begin{eqnarray}\label{eq:abrelations}
    &&B_{u}^\dagger=B_u, \hspace{5.25 cm} A_{uv}^\dagger=A_{uv}, \nonumber\\
    &&B_u^2=A_{uv}^2=1,   \hspace{4.5 cm}  [B_u, B_v]=0, \nonumber \\
    &&A_{uv}=-A_{vu},     \hspace{4.75 cm}  A_{uv}B_w=(-1)^{\delta_{uw}+\delta_{vw}}B_wA_{uv}, \nonumber \\
    &&A_{uv}A_{wx}=(-1)^{\delta_{uw}+\delta_{ux}+\delta_{vw}+\delta_{vx}}A_{wx}A_{uv},
    \nonumber \\
    &&i^{|C|}\prod\limits_{\nu=1}^{|C|} A_{c_\nu c_{\nu+1}} =I,
\end{eqnarray}
where in the final equality $C$ is any cycle in $\Gamma$ specified via an ordered list of vertices $C=\{c_1, c_2, ... , c_{|C|}, c_1\equiv c_{|C|+1}\}$ for $c_\nu\in V_\Gamma$ with only the final vertex repeated, and, therefore, the product is over all edge operators in the cycle. 

Next, a second graph can be introduced, the so-called system graph $\Sigma=\{V_\Sigma, E_\Sigma\}$. A valid system graph is any undirected, connected graph with vertex set $V_\Sigma=V_{\mathrm{phys}}\cup V_{\mathrm{virt}}$ equipped with a bijective mapping $M:V_\Gamma\rightarrow V_\mathrm{phys}$---that is, $|V_\Gamma|=|V_\mathrm{phys}|$. The subscripts are shorthand for physical vertices and virtual vertices. The physical vertices correspond to physical fermionic modes in the interaction graph and the virtual vertices (if they exist) correspond to additional auxillary fermionic modes that can be freely introduced. 
One can envision constructing the system graph from the interaction graph by adding an arbitrary number of virtual vertices and adding or removing any edges so long as the final graph is connected. Note that the condition that the graph is connected implies that any two vertices that were connected before are still connected via physical or auxillary vertices. This connectivity condition is sufficient for one to implement any interaction terms in the Hamiltonian, see e.g., Eq.~(\ref{eq:paths}) below. Consequently, the edge set of the system graph can be completely different from that of the interaction graph.

An encoding of a fermionic system on such a graph $\Sigma$ consists of associating with each vertex $v\in V_\Sigma$ a set of
\begin{equation}
n_v:=\ceil{d(v)/2}
\end{equation}
qubits, where $d(v)$ is the degree of vertex $v$. A set of $2n_v$ Pauli operators $\{{\tilde{\gamma}}_v^1, {\tilde{\gamma}}_v^2, \cdots, {\tilde{\gamma}}_v^{2n_v}\}$ can then be defined on these qubits. In the following, these operators are referred to as local Majoranas. Note that we have introduced the convention of using subscripts $\{u,v,w,\cdots\}$ to index vertices of $\Sigma$ and superscripts $\{i,j,k,\cdots\}$ to index local quantities such as enumerations of the local Majoranas or indices of internal qubits. Furthermore, subscripts $\{\nu, \mu, \cdots\}$ are used in various places for indexing generic sets. 

The local Majoranas can be any choice of operators that satisfy the following conditions:
\begin{enumerate}
    \item They obey the Majorana-operator\footnote{Note that these correspond to both types of $\gamma$ and $\gamma'$ operators defined in Eq.~(\ref{eq:majorana}).} properties including anti-commutation relations with other local Majoranas defined on the vertex. Furthermore, they must commute with the local Majoranas on other vertices. That is,
\begin{align}
       \tilde{\gamma}_v^{k\dagger}=\tilde{\gamma}_v^k, \quad
       \{{\tilde{\gamma}}_v^j, {\tilde{\gamma}}_v^k\}=2\delta_{jk}, \quad [{\tilde{\gamma}}_u^j, {\tilde{\gamma}}_v^k]=0 \,\, \mathrm{for} \,\, u\neq v.
    \end{align}
    \item They generate the full Pauli group on the $n_v$ qubits associated with $v\in V_\Sigma$.
\end{enumerate}

Any explicit choice for the local Majoranas can be mapped to any other via a Clifford circuit acting on the qubits associated with that vertex~\cite{chien2020custom}. Most simply, one could consider encoding the local Majoranas via a Jordan-Wigner transformation. That is, given some enumeration of the qubits in a vertex $v$, one has
\begin{equation}\label{eq:JWencoding}
       \{{\tilde{\gamma}}_v^1, {\tilde{\gamma}}_v^2, {\tilde{\gamma}}_v^3, {\tilde{\gamma}}_v^4, {\tilde{\gamma}}_v^5, {\tilde{\gamma}}_v^6, \cdots\} \longrightarrow \{X_{v}^{1}, Y_{v}^{1}, Z_{v}^1X_{v}^2, Z_{v}^1Y_{v}^2, Z_{v}^1Z_{v}^2X_{v}^{3}, Z_{v}^1Z_{v}^2Y_{v}^3, \cdots\}.
\end{equation}
It is straightforward to verify that this choice satisfies the conditions above. One could also use other encodings---for instance,  Fenwick trees~\cite{whitfield2016local, setia2018bravyi} or ternary trees~\cite{jiang2020optimal}. This work concerns only the case of a Jordan-Wigner encoding of the local Majoranas. Note, however, that the same techniques and many of the qualitative results will apply similarly to these other choices.

Once the local Majoranas are specified, each local Majorana can be associated with an edge of that same vertex. That is, both the local Majoranas associated with a vertex and the edges connecting the vertex to its neighbors are enumerated in $\Sigma$. The $j$-th local Majorana corresponding to a vertex is then associated with the edge connecting it to its $j$-th neighbor. Therefore, each edge $e\in E_\Sigma$ has two associated local Majoranas---one at each endpoint. Given such a choice, encoded edge operators acting on qubits can be defined as
\begin{equation}\label{eq:qubitedgeop}
    \tilde{A}_{uv}=\epsilon_{uv} {\tilde{\gamma}}_u^{\xi_u(v)}{\tilde{\gamma}}_v^{\xi_v(u)},
\end{equation}
where $v$ is the $\xi_u(v)$-th neighbor of $u$, $u$ is the $\xi_v(u)$-th neighbor of $v$, and the Levi-Civita tensor $\epsilon_{uv}$ is defined with respect to an arbitrary choice of orientation for each edge in $\Sigma$. In particular, we let $\epsilon_{uv}=1$ if $u$ is the head of the oriented edge $(u,v)$ and $\epsilon_{uv}=-1$ if $u$ is the tail. This choice of enumerating the edges of each vertex $u$, as specified by picking $\xi_u(v)$ for each neighbor $v$ of $u$, will become important later in Sec.~\ref{ss:limits} when discussing the strong-coloring problem.

Furthermore, the vertex operator on qubits can be encoded as
\begin{equation}\label{eq:qubitvertop}
    \tilde{B}_u=(-i)^{n_u}\prod_{j=1}^{2n_u} {\tilde{\gamma}}_u^j.
\end{equation}

One can verify that the choices of encodings in Eqs.~(\ref{eq:qubitedgeop}) and (\ref{eq:qubitvertop}) satisfy all but the final loop condition in Eq.~(\ref{eq:abrelations}). To satisfy the loop condition, it is necessary to restrict the simulation to the subspace of the total Hilbert space that does satisfy this condition. In the context of quantum error correction, this space is the codespace stabilized by the loop operators $\tilde L$ around cycles $C$ on $\Gamma$, defined by
\begin{equation}
    \tilde L(C)=i^{|C|}\prod\limits_{j=1}^{|C|} \tilde A_{c_jc_{j+1}}.
\end{equation}
As the encoded edge and vertex operators commute with the loop operators, once a state is initialized in the code subspace, the simulation remains in that subspace assuming no algorithmic or experimental errors. For considerations regarding boundary conditions and fermionic parity, see Refs.~\cite{setia2019superfast, chien2020custom}. Here, we consider only open boundary conditions for simplicity, but other boundary conditions can be analyzed within the framework of this work as well.

Once in the code subspace, the mapping from fermionic edge and vertex operators can be performed to qubit edge and vertex operators, $A_{uv}\rightarrow \tilde  A_{uv}$ and $B_u\rightarrow\tilde B_u$. This completes the mapping from a fermionic Hamiltonian $H$ to a qubit Hamiltonian $\tilde{H}$,
\begin{equation}\label{eq:map}
    \mathcal{M}:H=\sum_{\nu} \kappa_\nu h_\nu \longrightarrow \tilde{H}=\sum_{\nu} \kappa_\nu \tilde{h}_\nu,
\end{equation}
where $\kappa_\nu$ are constants related to the original coupling coefficients and the $h_\nu$ and $\tilde{h}_\nu$ are products of edge and vertex operators on fermionic Majorana modes and qubits, respectively.

This mapping $\mathcal{M}$ is not unique---the exact form of each $\tilde h_\nu$ depends not only on the particular system graph $\Sigma$ that determines the mapping, but also on the paths through the system graph chosen to simulate each corresponding $\tilde h_\nu$, as well as the choice of encoding of the local Majoranas. In particular, one does not necessarily need to have a direct edge $(u,v)\in E_\Sigma$ to implement $\tilde A_{uv}$. An edge operator between two modes $u$ and $v$ not directly connected is given by a product of edge operators along any path $
P_{uv}=\{p_1=u, p_2, \cdots, p_{|P_{uv}|}=v \}$ connecting the two modes. That is,\footnote{Observe that Eq.~(20) slightly overloads the notation $\tilde{A}_{uv}$, as strictly speaking, the $\tilde{A}_{uv}$ operators on the left- and right-hand sides of the equation have a slightly different meaning. In particular, one should distinguish between $\tilde{A}_{uv}$ that, given the system graph, can be directly implemented as in Eq.~(\ref{eq:qubitedgeop}), and those that cannot and must be implemented via a product of such operators as in Eq.~(\ref{eq:paths}). The meaning should be clear from the context. Note that tilde operators always denote those acting on qubits and not on fermionic modes.}
\begin{align}\label{eq:paths}
    \tilde A_{uv}&=\prod_{\nu=1}^{|P_{uv}|-1}\tilde A_{p_\nu p_{\nu+1}}.
\end{align}
Therefore, given the assumption that $\Sigma$ is connected, all edge operators in the qubit Hamiltonian $\tilde H$ can be implemented by choosing any path between the relevant vertices. 
Again, these path choices are not unique. While the precise details depend on these choices, it is always true that each $\tilde h_\nu$ is a string of Pauli operators on qubits. That is $\tilde{h}_\nu=\{X, Y, Z, I\}^{\otimes{n}}$, where $n$ is the number of qubits $\tilde{h}_\nu$ acts on. 
Importantly, whether an operator $\tilde{A}_{uv}$ can be implemented directly or must be implemented via a path of such operators through the system graph, it obeys all the same relations given in Eq.~(\ref{eq:abrelations}).

These choices do matter, however. In particular, recall virtual vertices are allowed in the system graph which, at the cost of more qubits, enable more choices of paths between different physical vertices. This tradeoff between more qubits and more direct (and correspondingly, more local) paths for implementing the required Pauli operators is the essential tension in regards to optimizing a Hamiltonian-simulation algorithm for a fermionic system in this construction.

\subsection{Prior Work on Optimizing System Graphs}
Some of the tradeoffs implied by the custom fermionic encoding have already been explored. In particular, Ref.~\cite{chen2020exact} discusses how the flexible framework of custom fermionic codes allows for designing fermionic encodings suited to particular qubit architectures by balancing the number of qubits required for an encoding with the Pauli weight of the resulting operators. In one limit, where the system graph is a line graph, one recovers the Jordan-Wigner transformation. By adding qubits and connectivity in the system graph, one can reduce the Pauli weight of the resulting operators, obtaining local or quasi-local encodings. This exact tradeoff was explored in detail for a variety of different system graphs in Ref.~\cite{chien2020custom} for the 2-body SYK model, which has all-to-all coupled fermions. 

Observe that the tradeoff between Pauli weight of operators and numbers of qubits and qubit connectivity is directly related to the properties of the system graph $\Sigma$. For instance, the number of qubits $Q(\Sigma)$ is directly determined by the degree of the vertices in $\Sigma$ as
\begin{equation}\label{eq:qubitcount}
    Q(\Sigma)=\sum_{v\in V_\Sigma}n_v=\sum_{v\in V_\Sigma}\ceil{d(v)/2}.
\end{equation}
As is shown in the following, this tight connection between graph-theoretic properties and resource counts holds even for more complicated properties of the fermion-to-qubit encoding and the resulting Hamiltonian simulation.

\section{Parallelization and Path Coloring}\label{s:parallelization}
\subsection{Notions of Parallelization}
In this work, a new possibility for optimization afforded by the flexibility of the custom fermionic codes is considered: parallelization. We use the term parallelization instead of the related concept of circuit depth because our analysis concerns a slightly higher level of abstraction than the particular circuit-level implementation of a Hamiltonian-simulation algorithm. It is assumed that provided 
two Pauli strings $\tilde{h}$ and $\tilde{h}'$ act non-trivially on disjoint sets of qubits, they may be implemented simultaneously in a quantum-simulation algorithm. Therefore, the goal is to minimize the number of steps required to implement the full set of Pauli operators in the interaction set $\tilde{\mathcal{T}}=\{\tilde{h}_\nu\}$. 
If one can choose paths on the system graph for the implementation of the required Pauli strings that minimizes collisions between those paths and orders the implementation of these operators in an optimal way, one can minimize the circuit depth for implementing the relevant operators. See Fig.~\ref{fig:procedure}b for an example. This formulation is especially relevant to quantum simulation via product formulas, in which these Pauli operators are directly implemented for each Trotter step. 

It is important to note that our approach focuses solely on grouping the Pauli strings so as to minimize the number of steps to implement the full interaction set. It is well established
that the choice of ordering terms can impact the Trotter error, which in turn changes the overall circuit depth of the Hamiltonian-simulation algorithm required to achieve a certain error tolerance~\cite{hastings2014improving, tranter2018comparison, Childs2019fasterquantum, tranter2019ordering,nguyen2021digital}. While such effects are not considered in this work, when applying the parallelization techniques here to a particular problem of interest, one should view parallelization of the sort considered here as one piece of a many-faceted optimization.

\subsection{Graph Coloring}
The parallelization problem defined above can be formalized using the notion of path coloring on a graph. This problem also arises in other similar networking and scheduling problems \cite{erlebach2001complexity}. We begin by reviewing the ideas of colorings on graphs and then describe how the parallelization problem may be formulated in these terms. 

Consider a graph $G=\{V, E\}$. A vertex coloring on $G$ is a mapping $\mathcal{C}: V\rightarrow C$ where $C$ is a set of so-called colors or wavelengths. A valid coloring $\mathcal{C}$ is one such that no adjacent vertices in $G$ are assigned the same color. The smallest number of colors required to (vertex) color a graph is called its chromatic number, $\chi(G)$. For a general graph, finding $\chi(G)$ is NP-hard~\cite{karp1972reducibility}. However, both bounds and effective heuristic algorithms exist. A simple and useful upper bound is
\begin{equation}\label{eq:greedybnd}
    \chi(G)\leq \max_{v\in V} d(v)+1,
\end{equation}
where $d(v)$ is the degree of vertex $v$~\cite{brooks1941colouring}. A coloring satisfying this bound can be obtained in polynomial time in the number of vertices using the greedy coloring algorithm presented below.
\begin{algorithm}[H]
 \caption{Greedy Coloring \label{alg:greedycoloring}}
 \begin{algorithmic}[1]
  \Function{GreedyColor}{$G=\{V,E\}, C$}
    \For{each $v \in V$}
        \State Assign $v$ the first color $c\in C$ not used by any of its neighbors
    \EndFor
 \EndFunction
 \end{algorithmic}
\end{algorithm}

If $G$ is a simple, connected graph, but is neither a complete graph nor an odd cycle, then this bound is improved to
\begin{equation}\label{eq:greedybnd2}
    \chi(G)\leq \max_{v\in V} d(v),
\end{equation}
and the greedy coloring algorithm will still satisfy this bound~\cite{brooks1941colouring} . 

The bound on $\chi(G)$ can be still lowered by the clique number $\omega(G)$ of the graph---that is, the size of the largest clique in $G$, where a clique is a complete induced subgraph of $G$. Therefore, the size of any clique $W(G)$ is also a valid lower bound. This gives
\begin{equation}\label{eq:lowerbound}
    \chi(G)\geq \omega(G) \geq |W(G)|.
\end{equation}

A related problem to the vertex-coloring problem is the path-coloring problem. As previously described, this will be our graph-theoretic problem of interest when formalizing the problem of optimally parallelizing the implementation of the Pauli strings that result from a custom fermionic code. In this problem, given a set of paths $\mathcal{P}$ in the graph $G$, one seeks to color the paths such that no two paths which share a vertex in $G$ receive the same color and that a minimum number of colors is used to color all the paths.\footnote{Note that typically in the literature, this problem is defined such that no paths can share an \emph{edge} instead of a vertex. Our alternative definition is due to the particular context in which path coloring is applied.}

The path coloring problem can be mapped to a vertex coloring problem on a different graph called the conflict graph $\Pi(\mathcal{P})$ of the set of paths $\mathcal{P}$. The conflict graph has a vertex set $V_{\Pi(\mathcal{P})}=\mathcal{P}$ and edge set $E_{\Pi(\mathcal{P})}=\{(q,p)\,|\, q,p\in\mathcal{P}, \, q\cap p\neq\emptyset\}$. 
Therefore, the path coloring problem is also NP-hard. 

\subsection{Conflict Graphs for Parallelizability}
Having defined the path coloring, the connection to parallelizability becomes clear. Given a system graph $\Sigma$, one seeks to efficiently implement the interactions in the interaction graph $\Gamma$ as specified by the interaction set $\tilde{\mathcal{T}}$. For any interaction $\tau\in \tilde{\mathcal{T}}$, one requires a choice of path $p$ through $\Sigma$ joining the relevant vertices for the interaction $\tau$. Choosing a particular path for each interaction gives a path set $\mathcal{P}=\{p_\tau\}_{\tau\in \tilde{\mathcal{T}}}$ with $|\mathcal{P}|=|\tilde{\mathcal{T}}|$.
Given a choice of $\mathcal{P}$, one then seeks to determine the degree of parallelization via a coloring of a conflict graph $\Pi(\mathcal{P})$. We construct two different versions of the conflict graph, corresponding to what we dub the \emph{weak coloring problem} and the \emph{strong coloring problem}. The latter considers the internal qubit structure of the vertices of the system graph as specified by the custom fermionic encoding; the former does not. These problems can be formally specified as follows:
\begin{definition}[The weak coloring problem]\label{def:weakcoloring}
Given a system graph $\Sigma$ and a path set $\mathcal{P}$ on $\Sigma$ specifying the implementation of a set of interactions $\tilde{\mathcal{T}}$, construct a conflict graph $\Pi(\mathcal{P})$, whose vertex set is $\mathcal{P}$ and whose edge set is $E_{\Pi(\mathcal{P})}=\{(q,p)\,|\,q,p\in \mathcal{P},\,q\cap p\neq\emptyset\}$. The weak coloring problem is to optimally color $\Pi(\mathcal{P})$. 
\end{definition}
The chromatic number $\chi$ resulting from the weak coloring problem corresponds to the minimum number of steps required to implement all the interactions $\tau\in\tilde{\mathcal{T}}$, where it is assumed that interactions that require disjoint sets of vertices of the system graph may be implemented in parallel.

\begin{definition}[The strong coloring problem]\label{def:strongcoloring}
Given a system graph $\Sigma$ and a path set $\mathcal{P}$ on $\Sigma$ specifying the implementation of a set of interactions $\tilde{\mathcal{T}}$, construct a conflict graph $\Pi(\mathcal{P})$, whose vertex set is $\mathcal{P}$ and whose edge set is $E_{\Pi(\mathcal{P})}=\{(q,p)\,|\,q,p\in \mathcal{P},\,Q(q)\cap Q(p)\neq\emptyset\}$, where $Q(p)$ gives the set of internal qubits required to implement to the path $p\in\mathcal{P}$. The strong coloring problem is to optimally color $\Pi(\mathcal{P})$.
\end{definition}
Note that $Q(p)$ in the definition of the strong coloring problem depends on the local Majorana encoding (i.e., Jordan-Wigner, Fenwick trees, etc.) in the system-graph vertices. This work only considers the Jordan-Wigner encoding of local Majoranas. The following section will provide an explicit description of $Q(p)$ in this setting. Here, the resulting chromatic number $\chi$ corresponds to the minimum number of steps required to implement all the interactions $\tau\in\tilde{\mathcal{T}}$, where it is assumed that interactions that require disjoint sets of qubits may be implemented in parallel. 

Compared to the weak coloring problem, the definition of parallelizability in the strong coloring problem is connected more directly to the qubit architecture and to the circuit depth of the Hamiltonian-simulation algorithm; the weak coloring problem has the advantage of being somewhat more abstracted and easier to work with. Both schemes are considered in this work. Observe also that the definitions of the weak and strong coloring problems take in both the system graph and a particular choice of path for each interaction in the interaction set. This choice of paths, as specified by the set $\mathcal{P}$, is not unique, of course, and to truly maximize the amount of parallelization, one must both pick the optimal path set $\mathcal{P}$ and optimally color the resulting conflict graph. Naturally, this is a very difficult problem. In particular, the following result can be stated:
\begin{fact}
Optimally parallelizing the implementation of an interaction set $\tilde{\mathcal{T}}$---in either the weak or strong coloring sense---is NP-hard.
\end{fact}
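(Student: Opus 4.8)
The plan is to establish NP-hardness by reduction from a known NP-hard problem, using the structure already built up in the paper: the path coloring problem reduces to vertex coloring on the conflict graph, and vertex coloring is NP-hard. The cleanest route is to exhibit a family of inputs (system graph $\Sigma$ together with an interaction set $\tilde{\mathcal{T}}$) for which the freedom in choosing $\mathcal{P}$ is trivial — i.e., each interaction has a unique shortest/forced path — so that the parallelization problem collapses exactly to coloring a prescribed conflict graph, and then argue that an arbitrary graph can be realized (or simulated) as such a conflict graph. In other words, I would show that graph coloring is a special case of the weak (and strong) parallelization problem.

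Concretely, I would proceed as follows. First, recall from the discussion preceding Definition~\ref{def:weakcoloring} that for a fixed path set $\mathcal{P}$, computing the optimal number of parallel steps is exactly $\chi(\Pi(\mathcal{P}))$, and that path coloring in this vertex-sharing sense is NP-hard (stated in the excerpt). Second, I would argue that the general parallelization problem — which additionally optimizes over path choices — is at least as hard: given an instance of path coloring on a graph $G$ with path set $\mathcal{P}$, I construct a system graph in which each path in $\mathcal{P}$ corresponds to an interaction whose endpoints are only joined by that one path (for instance by subdividing edges or attaching pendant structure so alternative routes are strictly longer and hence never chosen by an optimal solution, or more simply by taking $\Sigma = G$ with $\mathcal{P}$ a set of geodesics that are unique). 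Then every feasible $\mathcal{P}$ for the constructed instance is the original one, and optimal parallelization returns $\chi(\Pi(\mathcal{P}))$, so a polynomial-time algorithm for parallelization would solve path coloring, contradiction (assuming $\mathrm{P}\neq\mathrm{NP}$). Third, I would check that the interaction set I write down is genuinely realizable as an interaction set of a fermionic Hamiltonian of the form in Eq.~(\ref{eq:alltoall_edge}): each desired endpoint pair $(u,v)$ just needs a nonzero $\kappa_{uv}$, so this is immediate. Fourth, for the strong coloring version I would repeat the same reduction but observe that with the Jordan-Wigner encoding of local Majoranas, if I route each path through distinct low-degree (degree-two) vertices, the internal-qubit conflict condition $Q(q)\cap Q(p)\neq\emptyset$ coincides with the vertex-sharing condition $q\cap p\neq\emptyset$, so the strong conflict graph equals the weak one on these instances and the same hardness follows.

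The main obstacle I expect is the realizability bookkeeping in the reduction — namely ensuring that in the constructed system graph the ``forced path'' property actually holds, so that an optimal parallelizer cannot cheat by re-routing some interaction along an alternative path and thereby reducing conflicts below $\chi(\Pi(\mathcal{P}))$. Handling this requires either (a) choosing $G$ and $\mathcal{P}$ within a subclass of path-coloring instances that is still NP-hard and has unique-path structure (e.g. trees or rings, where path coloring is already known to be hard, cf.~\cite{erlebach2001complexity}), or (b) padding $\Sigma$ with enough pendant vertices/edge subdivisions that every non-designated route between an interaction's endpoints is so long that using it strictly increases the conflict count. Option (a) is cleanest if such a subclass is available; option (b) needs a short monotonicity argument that lengthening one path can only add conflicts. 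A secondary, minor technical point is that ``optimal parallelization'' jointly optimizes path choice and coloring, so I should phrase the reduction as: the optimum over all $(\mathcal{P}, \text{coloring})$ equals $\min_{\mathcal{P}} \chi(\Pi(\mathcal{P}))$, and on the constructed instances the outer minimum is attained (only) at the designated $\mathcal{P}$, pinning the answer to $\chi(\Pi(\mathcal{P}_{\mathrm{designated}}))$.
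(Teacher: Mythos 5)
Your route is genuinely different from the paper's. The paper does not construct a reduction at all: it posits an oracle that hands over the path set $\mathcal{P}$ whose conflict graph has minimum chromatic number, and then observes that what remains is vertex coloring of $\Pi(\mathcal{P})$, which is NP-hard~\cite{karp1972reducibility}; the authors themselves immediately remark that this oracle is very powerful. Your plan instead embeds a known hard coloring/path-coloring problem as a special case of the parallelization problem by engineering instances in which the path choice is forced, so that the joint optimum over path sets and colorings collapses to $\chi(\Pi(\mathcal{P}_{\mathrm{designated}}))$. If completed, your argument is the cleaner and more standard hardness proof: the paper's version implicitly relies on the residual coloring instances that actually arise as conflict graphs being hard, which is precisely the realizability question you force yourself to settle; what the paper's version buys in exchange is a two-line argument that sidesteps all routing issues.

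Two concrete cautions on the realizability step you flagged. First, option (a) as stated partly fails for the conflict notion used here: conflicts are defined by \emph{vertex} sharing, not edge sharing, and vertex-intersection graphs of paths in a tree are chordal (they are intersection graphs of subtrees of a tree), hence perfect and optimally colorable in polynomial time. So tree hosts, despite having unique paths, cannot encode a hard instance of the weak problem; the tree-hardness results in~\cite{erlebach2001complexity} concern the edge-sharing version. The natural vertex-sharing analogue lives on ring hosts, where fixed paths give circular-arc coloring (NP-hard), but there the two-arc routing freedom is not eliminated and needs its own argument. Second, the monotonicity you want in option (b) is not ``longer routes cost more''---the objective is the number of colors, not path length---but rather that every alternative route's vertex set \emph{contains} the designated path's vertex set, so rerouting can only add conflict edges; the padding must be arranged to guarantee that containment, or you must argue directly that the optimum over path sets is attained at the designated $\mathcal{P}$. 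A minor bookkeeping point: a nonzero $\kappa_{uv}$ yields two interactions per pair, $\tilde{A}_{uv}\tilde{B}_v$ and $\tilde{B}_u\tilde{A}_{uv}$, which at most doubles multiplicities but should be tracked in the reduction.
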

\begin{proof}
Suppose there exists an oracle that, given an interaction list and a system graph, returns the solution set of paths $\mathcal{P}$ that will enable the creation of a conflict graph $\Pi(\mathcal{P})$ with the minimum chromatic number. Given $\mathcal{P}$ via this oracle, one is left with a graph coloring problem on $\Pi(\mathcal{P})$, which is known to be NP-hard~\cite{karp1972reducibility}.
\end{proof}
The oracle invoked in the proof above is quite powerful in its own right. Therefore, outside some analytically accessible examples, one need to turn to heuristic algorithms to address the selection of the path set $\mathcal{P}$ and the solving of the resulting weak and strong coloring problems. The full procedure of defining and solving the weak and strong coloring problems starting from the qubit architecture is shown in Fig.~\ref{fig:procedure} for a simple example. Each step of this process will be described in detail in the following sections.
\begin{figure}[t]
    \centering
    \includegraphics[width=0.9\columnwidth]{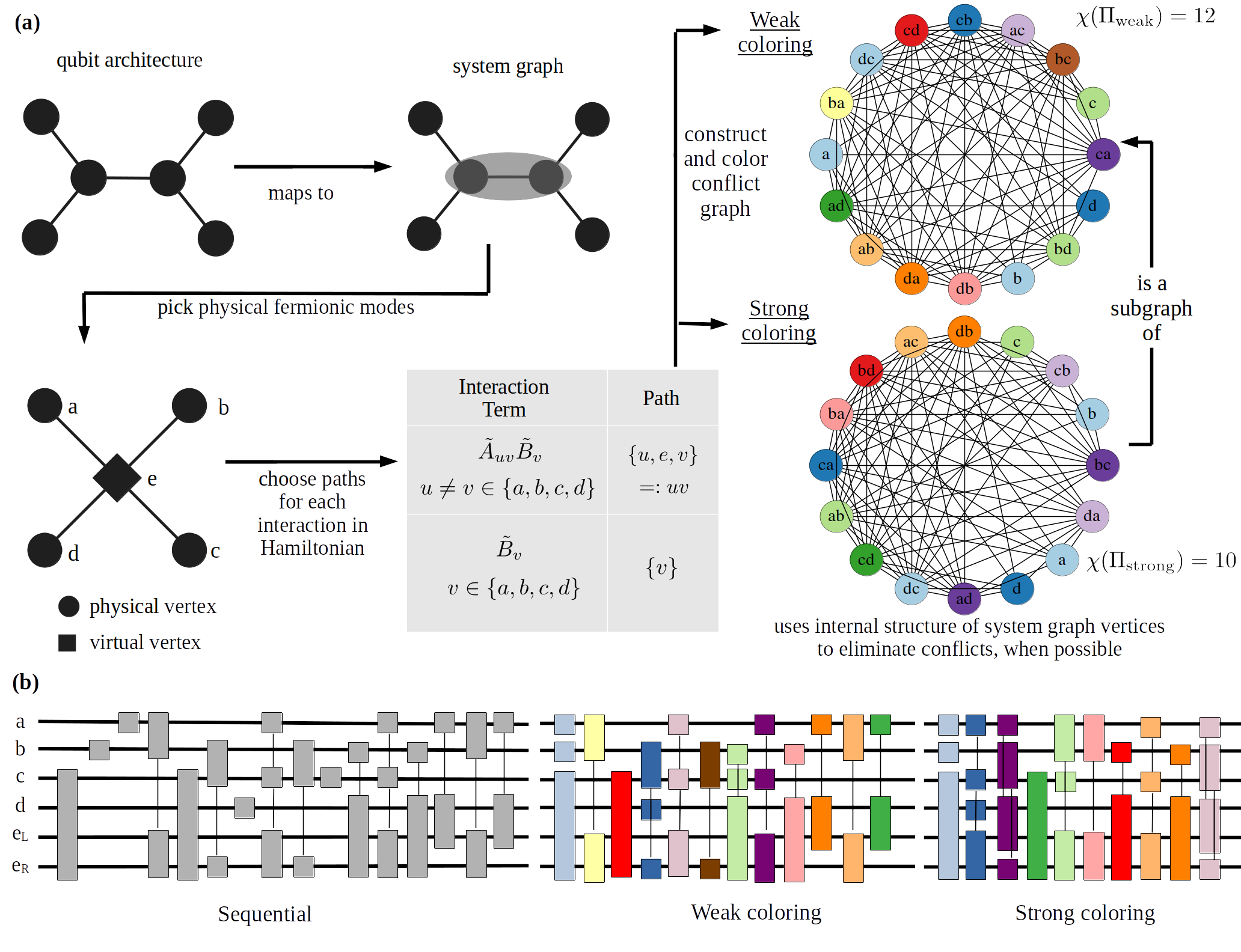}
    \caption{ (a) An overview of the full procedure of defining and solving the weak and strong coloring problems for parallelizing a Hamiltonian simulation of fermions. There are many stages for optimization: the choice of system graph, the choice of physical fermionic modes, the choice of paths linking those modes, and the coloring algorithm. While these choices are straightforward in this small example, for general problems, the design space is extremely large. This work focuses on the last two steps, which is an NP-hard optimization problem. Here, the conflict-graph vertices are labeled by the physical vertices of the system graph involved in the interaction. That is, $\tilde A_{uv}\tilde{B}_v$ is labeled by $uv$ and $\tilde B_v$ is labeled by $v$ for all $u,v$. Note that the difference between the weak and strong coloring problems in this example is in the ability of the strong coloring scheme to route through the virtual vertex $e$ to implement the $ad$ (and $da$) path simultaneously with $bc$ (and $cb$) without any conflict, hence a lower chromatic number compared with the weak coloring scheme. This corresponds to enumerating the edges of vertex $e$ as $\{ea, ed, eb, ec\}\mapsto\{1,2,3,4\}$. (b) Corresponding circuit diagrams for ordering the Pauli strings according to the sequential strategy and via the weak and strong coloring problems. Here $e_L$ and $e_R$ label the left and right internal qubits of vertex $e$ of the system graph, respectively. Colors match those in the corresponding conflict graphs and gates of the same color are implemented simultaneously.}
    \label{fig:procedure}
\end{figure}

\section{Analytic Results}\label{s:analyticresults}
\subsection{The Hamiltonian} 
For the purposes of exploring the weak and strong coloring problems for a variety of system graphs both analytically and numerically, we shall make use of an explicit choice of a fermionic Hamiltonian as a minimal example. In particular, let us consider an all-to-all Hamiltonian with two-mode interactions given by Eq.~(\ref{eq:H}). This Hamiltonian can be expressed in terms of edge and vertex operators as in Eq.~(\ref{eq:alltoall_edge}).
Assuming all coefficients are non-zero, the interaction graph $\Gamma$ for this problem will be the complete graph on $N$ vertices, $K_N$, and the interaction set $\mathcal{T}$ is given by Eq.~(\ref{eq:interactionset}). 
Note that $|\mathcal{T}|=N^2$.

\subsubsection{Extensions to Other Models}
The Hamiltonian in Eq.~(\ref{eq:H}) is closely related to long-range fermionic systems, such as the SYK model~\cite{sachdev1993gapless, rosenhaus2019introduction}. 
To get exact results for specific Hamiltonians of interest (with or without long-range interactions), one can use the algorithm presented here to heuristically solve the weak and strong coloring problems for the relevant system graph. Another example of a minimal fermionic Hamiltonian is that with only nearest-neighbor hopping on a square lattice. This case will be studied later in Sec.~\ref{sec:NN-Hamiltonian}.

A generalization of our results worthy of particular emphasis is the case of Hamiltonians with $k$-body interactions for $k>2$. For instance, terms such as $a_u^\dagger a_v^\dagger a_w a_x$ yield, amongst other things, terms of the form $\tilde{A}_{uv}\tilde{A}_{wx}$ when expressed as edge and vertex operators. Quite clearly, implementing such a term in terms of Pauli operators requires two simultaneous paths through $\Sigma$: one from $u$ to $v$ and one from $w$ to $x$. The path set $\mathcal{P}$ can now be viewed as a multiset of paths, with each element of $\mathcal{P}$ (now potentially a set of paths) mapping to a vertex of the conflict graph. From there, construction of the conflict graph proceeds as usual.

\subsection{Rules for Strong Coloring}
In this section, the rules for constructing the conflict graph for the strong coloring problem given the interactions in Eq.~(\ref{eq:interactionset}) will be developed, under the assumption that local Majoranas are encoded via a Jordan-Wigner transformation on the internal qubits of each vertex of the system graph. This allows to abstract the problem of determining conflicts between paths to one about the properties of the system graph under consideration.

To begin, recall that each vertex $u\in V_\Sigma$ contains $n_u=\ceil{d(u)/2}$ qubits. Under a Jordan-Wigner encoding, one can imagine expanding each vertex of the system graph into a line graph of $n_u$ vertices, where each new vertex is associated with two edges of the original vertex as depicted in Fig.~\ref{fig:catepillar}.\footnote{Note when $d(u)$ is odd, one of these internal vertices has only one external edge.} Any local Majorana operator on the vertex $u$ will induce a Jordan-Wigner string on some subset of these internal vertices. The first task is to identify what precisely these strings are for the four possible (types of) operators acting on the vertex $u$: $\tilde{B}_u$, $\tilde{A}_{xu}$, $\tilde{A}_{xu}\tilde{A}_{uy}$, and $\tilde{A}_{xu}\tilde{B}_u$ where $x,y\in V_\Sigma$ are arbitrary neighbors of $u$ in $\Sigma$. Observe that determining the qubits needed within vertex $u$ to implement the operator $\tilde{A}_{xu}\tilde{B}_u$ is equivalent to $\tilde{A}_{ux}\tilde{B}_u$ since $\tilde{A}_{ux}=-\tilde{A}_{xu}$.
\begin{figure}[h]
    \centering
    \includegraphics[width=0.5\columnwidth]{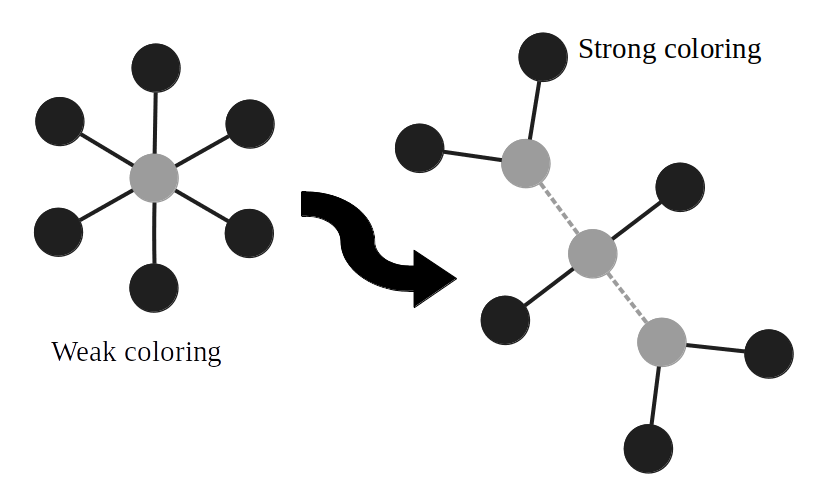}
    \caption{The central gray vertex is $u$ and black vertices are its neighbors. This is just notation and no formal coloring has been performed yet. When constructing the conflict graph for strong coloring under a Jordan-Wigner encoding of the local Majoranas, it is useful to think of each vertex $u\in V_\Sigma$ as being expanded to a line graph of $n_u=\ceil{d(u)/2}$ internal vertices, each connected to two of the original edges of $u$, where $d(u)$ is the degree of $u$. Different interaction types on vertex $u$ induce different Jordan-Wigner strings on these internal vertices as summarized in Tab.~\ref{tab:strongcoloringrules} and depicted in Fig.~\ref{fig:operations}.}
    \label{fig:catepillar}
\end{figure}

First consider a vertex operator $\tilde{B}_u$. Given a Jordan-Wigner encoding of the local Majoranas, one has immediately from Eqs.~(\ref{eq:JWencoding}) and (\ref{eq:qubitvertop}) that in terms of Pauli operators
\begin{equation}
    \tilde{B}_u=\bigotimes_{j=1}^{n_u}Z_u^j,
\end{equation}
where $Z_u^j$ is the Pauli-Z operator acting on qubit $j$ of vertex $u$. Therefore, a vertex operator uses all qubits on that vertex (see Fig.\ref{fig:operations}-a), affording no possibility for improved parallelization via strong coloring when implementing these terms.

On the other hand, operators of the form $\tilde{A}_{ux}$ acting on vertex $u$ do not use all the qubits. Such operators appear when vertex $u$ is a physical vertex and one is seeking to implement an interaction of the form $\tilde A_{uv}\tilde B_{v}$ between vertex $u$ and some other vertex $v$ via a path through $\Sigma$ that starts with the edge from $u$ to $x\in V_\Sigma$. It will be useful to introduce one more piece of notation. In particular, define
\begin{equation}
    a_u(x):=\bigg\lceil\frac{\xi_u(x)}{2}\bigg\rceil,
\end{equation}
so that it can be compactly stated that the first $a_u(x)$ qubits of vertex $u$ are ``active'' when implementing the local Majorana operator ${\tilde{\gamma}}_u^{\xi_u(x)}$. This follows immediately from the Jordan-Wigner encoding of these local Majoranas, where one should recall that the custom fermionic code requires an enumeration of both the internal vertices of $u$ and of its edges. Given a fixed choice of enumeration, $x$ is the $\xi_u(x)$-th neighbor of $u$. Therefore, from Eq.~(\ref{eq:qubitedgeop}), one immediately finds that the operator $\tilde{A}_{ux}$ makes use of the first $a_u(x)$ qubits of vertex $u$ (as well as the first $a_x(u)$ qubits of vertex $x$), see Fig.~\ref{fig:operations}-b.

Next consider an operator of the form $\tilde{A}_{xu}\tilde{A}_{uy}$ acting on vertex $u$. Such operators occur when vertex $u$ is an intermediate vertex along a path implementing an interaction between two physical fermionic modes. Just like $\tilde{A}_{ux}$, these operators also do not require the use of all qubits in $u$. Individually, $\tilde{A}_{xu}$ and $\tilde{A}_{uy}$ make use of the first $a_u(x)$ and the first $a_u(y)$ qubits in $u$, respectively. However, there are cancellations since the operators both act with Pauli-Z operators on the first $a_u(x)-1$ and $a_u(y)-1$ qubits, respectively. Such cancellations of Jordan-Wigner strings are reminiscent of the cancellations of such strings in sequential Trotter-Suzuki steps~\cite{hastings2014improving}. The net result is that only the qubits between $\min\{a_u(x), a_u(y)\}$ and $\max\{a_u(x), a_u(y)\}$ are used, see Fig.~\ref{fig:operations}-c.

Finally, consider an operator of the form $\tilde{A}_{xu}\tilde{B}_u$ acting on vertex $u$. These operators arise at the starting and ending vertices of a path. Once again, there are cancellations in the Pauli-Z operators required to implement the two sub-operators. In particular, $\tilde{A}_{xu}\tilde{B}_u$ acts on the last $n_u-a_u(x)+1$ qubits of vertex $u$, see Fig.~\ref{fig:operations}-d.
Tab.~\ref{tab:strongcoloringrules} summarizes the results in this section.
\begin{figure}
    \centering
    \includegraphics[width=\columnwidth]{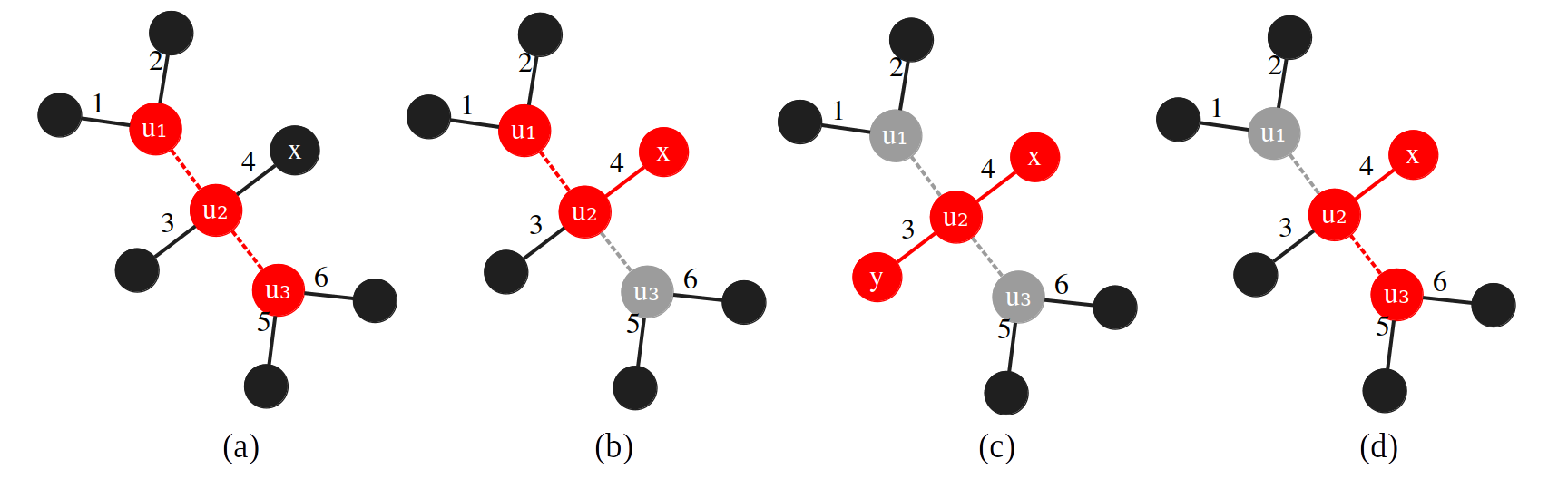}
    \caption{Examples of the internal qubits of vertex $u$ (enumerated top to bottom as $u_1, u_2, u_3$) that are required to implement the operators (a) $\tilde{B}_{u}$, (b) $\tilde{A}_{ux}$, (c) $\tilde{A}_{xu}\tilde{A}_{uy}$, and (d) $\tilde{A}_{xu}\tilde{B}_{u}$. Active qubits and input/output system-graph vertices are marked with red. The dashed lines denote internal edges. Again, the coloring is the notation and no formal graph coloring is assumed here. The choice of edge enumeration is marked. }.
    \label{fig:operations}
\end{figure}
\begin{table}[]
    \centering
    \begin{tabular}{|c|c|}
        \hline
        Term &  Active qubits in vertex $u$\\
        \hline
        $\tilde{B}_{u}$& All $n_u$ qubits \\
        $\tilde{A}_{ux}$& Qubits 1 to $a_u(x)$ \\
        $\tilde{A}_{xu}\tilde{A}_{uy}$& Qubits $\min\{a_u(x), a_u(y)\}$ to $\max\{a_u(x), a_u(y)\}$ \\ 
        $\tilde{A}_{xu}\tilde{B}_{u}$& Qubits $a_u(x)$ to  $n_u$ \\
        \hline
    \end{tabular}
    \caption{Rules for determining internal qubits used by the various terms that arise in simulating the Hamiltonian in Eq.~(\ref{eq:alltoall_edge}) using a Jordan-Wigner encoding of the local Majoranas. Recall $a_u(v):=\ceil{\frac{\xi_u(v)}{2}}$ is the number of ``active'' qubits when implementing ${\tilde{\gamma}}_u^{\xi_u(v)}$, and $v$ is the $\xi_u(v)$-th neighbor of $u$.}
    \label{tab:strongcoloringrules}
\end{table}

\subsection{Limits of Weak and Strong Coloring}\label{ss:limits}
In this section, two simple system graphs will be studied: a star graph $S_N$ with $N$ physical vertices all joined to a central virtual vertex, and a complete graph $K_N$ consisting of $N$ physical vertices. These examples are limiting cases for both the weak and strong coloring problems. In addition, they allow for straightforward analytic calculations and enable an understanding of the essential conceptual features of the two types of coloring problems. This understanding will be leveraged to determine what properties of a system graph allow for the greatest possible improvement from using strong coloring as opposed to weak coloring. An example with such an extreme separation will be constructed at the end of the section.

\subsubsection{Star Graph}
The star graph $\Sigma=S_N$ of $N$ physical vertices all coupled to a central virtual vertex is the worst-case limit for parallelization as there is a single bottleneck vertex through which all paths for the $N(N-1)$ two-mode interactions must pass. It helps to refer back to Fig.~\ref{fig:procedure} to visualize the procedure for the minimal case of $\Sigma=S_4$. As seen in that figure, the corresponding conflict graph for the weak coloring problem has a complete subgraph $K_{N(N-1)}$ consisting of all vertices that correspond to two-mode interactions, which sets a lower bound on the chromatic number of the conflict graph. No additional colors are needed to color the one-mode interaction vertices as the vertex operators $\{\tilde{B}_w\}_{w\notin\{u,v\}}$ can be implemented simultaneously with the $\tilde{A}_{uv}\tilde{B}_v$ operators. Therefore, the chromatic number for weak coloring is 
\begin{equation}\label{eq:weakstar}
    \chi_\mathrm{weak}(\Pi(S_N))=N(N-1).
\end{equation}

For strong coloring, it turns out that the even and odd $N$ cases must be addressed separately. First, consider $N$ even. Expanding all vertices of the system graph as in Fig.~\ref{fig:catepillar}, the physical vertices remain unexpanded, whereas the central virtual vertex $u$ expands to a line graph of $n_u=N/2$ vertices $\{u_1, \cdots u_{n_u}\}$, where each $u_j\in u$ has two edges that each connect to one neighbor of $u$ in $\Sigma$. The two-mode interactions involving vertex $u$
 induce eight Jordan-Wigner strings between each pair of these neighbors---two for each of the four choices of pairs of physical-neighbor vertices connected to a given pair $(u_\mu, u_\nu)$, $\mu\neq \nu$. They also induce single-vertex ``strings'' for each $u_\nu$ for the two-mode interactions between physical vertices that are both neighbors of that vertex. 

Minimizing the number of steps to avoid overlaps of these strings is straightforward: starting with $u_1$, implement all interactions that induce Jordan-Wigner strings originating from $u_1$ while simultaneously implementing the interaction that induces the longest non-overlapping Jordan-Wigner string originating from $u_{n_u}$. This takes $8n_u=4N$ steps. At this point, all interactions involving $u_1$ and $u_{n_u}$ have been implemented. Therefore, ignore those vertices and repeat the same procedure on the remaining $n_u-2$ internal vertices. Keep repeating this procedure until all two-mode interactions have been implemented. For $N>2$, implementing the single-mode interactions requires no extra steps as most physical vertices are unused for any given step, giving many opportunities to implement these interactions simultaneously with a given two-mode interaction. The net result (for $N>2$, even) is
\begin{equation}\label{eq:strongstarNeven}
    \chi_\mathrm{strong}(\Pi(S_N), N\,\mathrm{even})=\begin{cases}
    8 \sum_{\mu=1}^{n_u/2} (2\mu)-6 = \frac{N^2}{2}+2N -6, & \frac{N}{2}\, \text{even}, \\
    \\
    8 \sum_{\mu=1}^{(n_u+1)/2} (2\mu-1)-6 = \frac{N^2}{2}+2N-4, & \frac{N}{2}\, \text{odd}. \\
    \end{cases}
\end{equation}
Note that number six is subtracted from the sum to correct for over-counting in the final step which only involves two-mode interactions between physical vertices that share an internal vertex. See Fig.~\ref{fig:strongcoloringstar} for an example of this construction for $N=8$.

Now consider $N$ odd. Expanding the vertices of the system graph, the result is identical to the even-$N$ case except $u_{n_u}$ has only one edge joining it to a physical vertex. This implies that the two-mode interactions involving $u_{n_u}$ induce only four Jordan-Wigner strings between the internal vertices of $u$ instead of four. One can implement these interactions first while simultaneously implementing four of the eight interactions that induce the longest possible non-overlapping Jordan-Wigner strings starting from $u_1$. This takes $4n_u=2(N+1)$ steps. At this point, all interactions that involve $u_{n_u}$ are implemented, but four interactions are yet to be implemented for each induced Jordan-Wigner string involving $u_1$. These can be implemneted while simultaneously implementing four of the eight interactions that induce the longest possible non-overlapping Jordan-Wigner strings starting from $u_{n_u-1}$. This takes $4(n_u-1)$ steps. This staggered approach can be continued---implementing four of the eight interactions that induce a particular Jordan-Wigner string starting from a given internal vertex in each stage of the procedure---until all interactions are implemented. This gives
\begin{equation}\label{eq:strongstarNodd}
    \chi_\mathrm{strong}(\Pi(S_N), N\,\mathrm{odd})=4\sum_{\mu=1}^{n_u-1} (\mu+1)-2=
    \frac{N^2}{2}+2N-\frac{9}{2},
\end{equation}
where number two is subtracted from the sum to correct for over-counting in the final step of this procedure. See Fig.~\ref{fig:strongcoloringstar} for an example of this construction for $N=7$. 

The constructions yielding Eqs.~(\ref{eq:strongstarNeven}) and (\ref{eq:strongstarNodd}) are optimal. In particular, observe that in each step of these constructions, a path is implemented that passes through the central internal vertex $u_{\ceil{n_u/2}}$ of the central vertex $u$. The set of paths that go through this vertex form a complete induced subgraph of the conflict graph $\Pi_\mathrm{strong}(S_N)$ of maximum size. That is, the induced subgraph of this set of vertices in the conflict graph form the largest-size clique. Since one of these paths is implemented in every step of the construction, the corresponding coloring of $\Pi_\mathrm{strong}(S_N)$ saturates the clique-number lower bound on the chromatic number from Eq.~(\ref{eq:lowerbound}). Therefore, the constructions are optimal.
\begin{figure}
    \centering
    \includegraphics[width=\columnwidth]{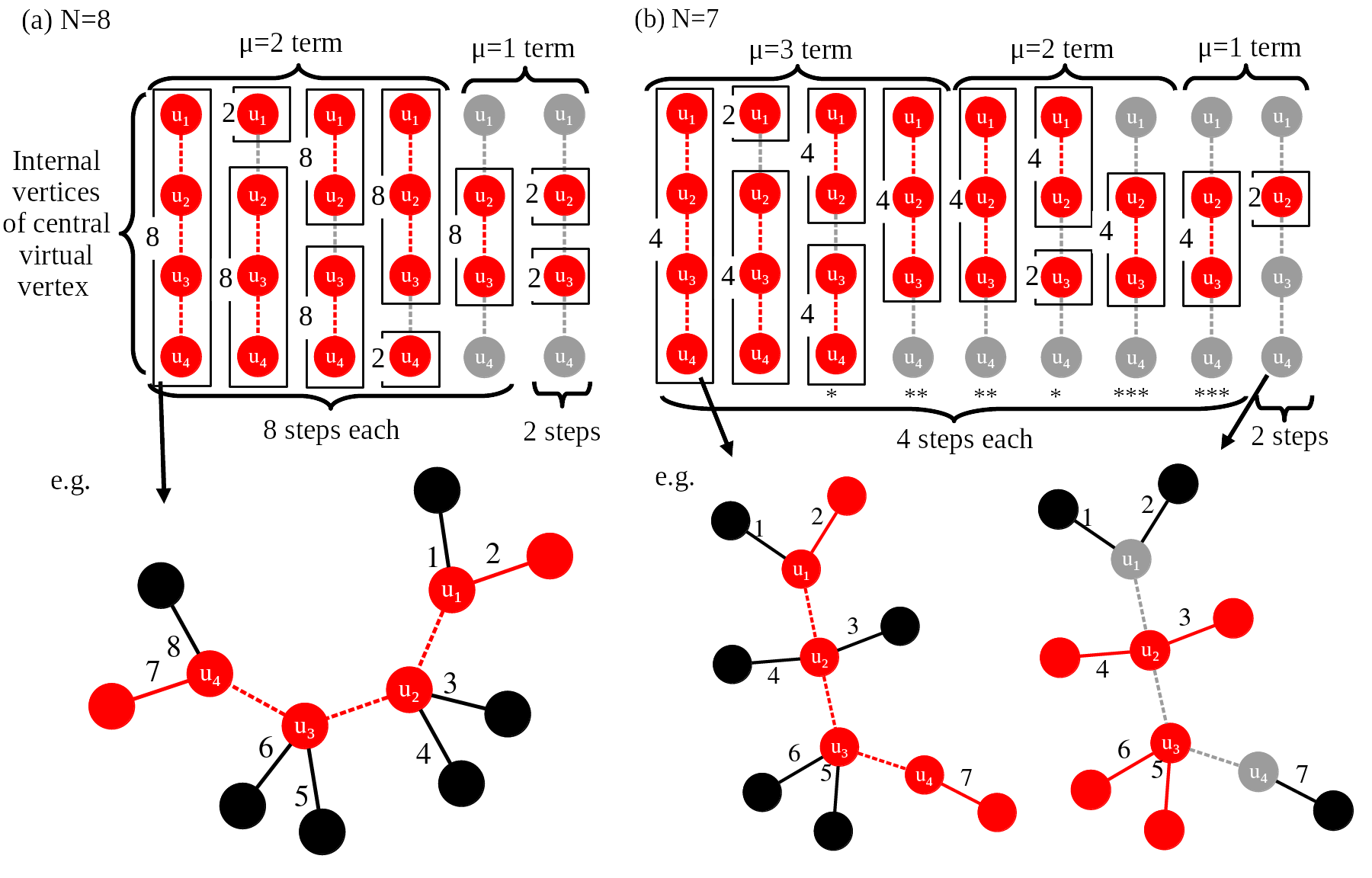}
    \caption{Examples of the optimal procedure to parallelize two-mode interactions via strong coloring for the star graph for (a) $N=8$ and (b) $N=7$. In the top of the figure, the internal vertices of the central virtual vertex $u$ are shown. Edges between internal vertices are represented by dashed lines. Red (gray) vertices are active (inactive) with red (gray) lines indicating induced (no) Jordan-Wigner strings. Terms are grouped as they appear in the respective sums over $\mu$ in Eqs.~(\ref{eq:strongstarNeven})-(\ref{eq:strongstarNodd}). Observe in the case of $N$ odd, this means that the Jordan-Wigner strings of a given type are split in the different groupings (the locations of such splits are marked by one, two, or three stars). The step counts underneath each group give the number of steps to implement all interactions that induce that set of Jordan-Wigner strings. This gives a total of 42 and 34 steps for $N=8$ and $N=7$, respectively. The counts associated with each individual Jordan-Wigner string give the number of interactions corresponding to that Jordan-Wigner string. This gives a total of $8\times(8-1)=56$ and $7\times(7-1)=42$ interactions for $N=8$ and $N=7$, respectively.  Representative examples of the types of interactions between physical vertices that induce the different Jordan-Wigner strings are shown in the bottom of the figure. Observe in (b) that for $N$ odd, the last internal vertex has only one physical vertex as a neighbor. This is responsible for the different procedure for optimal parallelization.}
    \label{fig:strongcoloringstar}
\end{figure}

\subsubsection{Complete Graph}\label{ss:complete}
The complete graph $\Sigma=K_N$ of $N$ all-to-all connected physical vertices is the opposite limit of the star graph. There are no bottlenecks in implementing paths between any pair of vertices as all two-body interactions are directly implementable. Obviously, these direct connections are the optimal paths. 

For the weak coloring problem, one can simultaneously implement $\floor{N/2}$ of the $N(N-1)$ two-mode interactions. Consequently, all two-mode interactions can be implemented in $N(N-1)/\floor{N/2}$ steps. In addition, there are $N$ one-mode terms to implement. For even $N$, these can all be implemented in one step after doing the two-mode interactions. For odd $N$, there is always an unused vertex for any step where two-mode interactions are implemented, and therefore the one-mode interactions can be done while doing the two-mode interactions. Therefore, the number of steps required for each case is
\begin{align}\label{eq:completeweak}
    \chi_\mathrm{weak}(\Pi(K_N))=\begin{cases}
    2N-1, & N\text{ even},\\
    2N, & N\text{ odd}.
    \end{cases}
\end{align}

To gain some intuition about the conflict graph, we can also arrive at Eq.~(\ref{eq:completeweak}) by observing that the conflict graph $\Pi_{\mathrm{weak}}(K_N)$ consists of $N$ pairwise-overlapping complete subgraphs $K_{2N-1}$, as depicted in Fig.~\ref{fig:complete_weak} for the case of $N=4$. This structure arises because for any $v\in V_{\Sigma}$, there are $2N-1$ interactions involving this vertex which all mutually conflict. Pairwise overlaps occur between these complete subgraphs because each vertex in $\Pi_\mathrm{weak}(K_N)$ corresponding to a two-mode interaction between $u,v\in V_\Sigma$ is in the complete subgraph corresponding to both $u$ and $v$. 

Let each of the $N$ complete subgraphs be the vertices of a new (complete) graph $G$ with two edges between each $v\in G$.  One can then map the problem of coloring the two-mode-interaction vertices to one of edge coloring $G$ so that no edges sharing a vertex share a color. In particular, one may color the two-mode interaction vertices of $\Pi_\mathrm{weak}(K_N)$ with the color of the corresponding edge of $G$. Edge coloring $K_{N}$ takes $N-1$  colors for $N$ even and $N$ colors for $N$ odd~\cite{soifer2009mathematical}. Due to the double edges between each $(u,v)\in V_G$, twice this number is required. Finally, one must consider coloring the one-mode interaction vertices in the original problem. For even $N$, this requires an additional color because all $2(N-1)$ colors are used in each complete subgraph. For odd $N$, each complete subgraph has two unused colors, and therefore, one can use one of these colors for the single-mode vertex. This recovers Eq.~(\ref{eq:completeweak}) for the number of steps required for weak coloring.
\begin{figure}
    \centering
    \includegraphics[width=0.705\columnwidth]{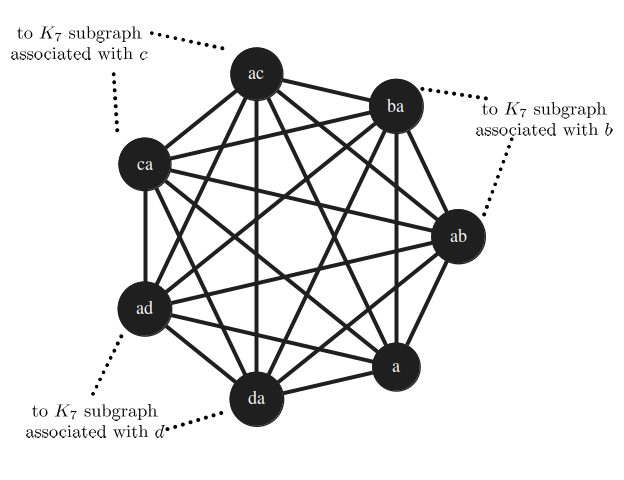}
    \caption{An illustration of the conflict graph for the weak-coloring problem with a system graph $K_N$ with $N=4$. The vertices of the system graph are labeled as $\{a,b,c,d\}$ (e.g., as in Fig.~\ref{fig:hamiltoniancycle}), and the vertices of the resulting conflict graph are labeled by the corresponding interaction. The conflict graph consists of $N=4$ interlocking complete subgraphs each associated with one of the system-graph vertices, as described in the main text. One such complete subgraph is shown.}
    \label{fig:complete_weak}
\end{figure}

Consider strong coloring for this problem. As all interactions can be directly implemented along a single edge of the system graph, one is limited by the capacity of the physical vertices to have multiple inputs and outputs. In particular, referring to Tab.~\ref{tab:strongcoloringrules}, it is clear that multiple ``ingoing'' ($\tilde A_{xu}\tilde B_u$) or ``outgoing'' ($\tilde A_{ux}$) interactions cannot be implemented simultaneously on a given vertex $u$. 
However, one can simultaneously have one ``ingoing'' and one ``outgoing'' interaction for a given vertex---that is, a term of the form $\tilde A_{ux}$ and a term of the form $\tilde A_{yu}\tilde B_{u}$ can be simultaneously implemented on vertex $u$ provided that $a_u(x)<a_u(y)$. 

A lower bound on the chromatic number of $\Pi_\mathrm{strong}(K_N)$ can be obtained in terms of the clique number of the graph (see Eq.~(\ref{eq:lowerbound})). In particular, one finds that
\begin{equation}\label{eq:lowerboundcomplete}
    \chi_\mathrm{strong}(\Pi(K_N))\geq N+2.
\end{equation}
This bound is derived as follows: Given any vertex $u\in V_{\Sigma}$, the set of all $N-1$ interactions $\tilde A_{uv}\tilde B_v$ for all $u\neq v$, the interaction $\tilde B_u$, and two of the $\tilde A_{vu}\tilde B_u$ interactions all require the use of the first internal qubit $u_1$. These interactions form the largest complete subgraph $K_{N+2}$ of $\Pi_\mathrm{strong}(K_N)$. Coloring this complete subgraph requires $N+2$ colors, yielding Eq.~(\ref{eq:lowerboundcomplete}). Therefore at best, asymptotically (in $N$) one obtains $\chi_\mathrm{weak}(\Pi(K_N))/\chi_\mathrm{strong}(\Pi(K_N))\sim 2$.

An upper bound on the chromatic number of $\Pi_\mathrm{strong}(K_N)$ can be found by explicit construction. For any Hamiltonian cycle\footnote{A Hamiltonian cycle on a graph is a cycle (closed loop) through the graph that visits each vertex exactly once.} on $K_N$, the edges in the cycle can be enumerated in such a way that all $N$ interactions along these edges can be implemented simultaneously via strong coloring. See Fig.~\ref{fig:hamiltoniancycle} for an example of this for $K_4$. The number of edge-disjoint Hamiltonian cycles on a complete graph is $(N-1)/2$ for odd $N$ and $(N-2)/2$ for even $N$~\cite{bryant2007cycle, alspach2008wonderful}. One can independently enumerate these edge-disjoint Hamiltonian cycles such that all interactions within each of these cycles can be implemented simultaneously.  Each of these disjoint Hamiltonian cycles are then sequentially implemented. Assuming no other improvements from strong coloring over weak coloring gives an obtainable upper bound, as described below.
\begin{figure}
    \centering
    \includegraphics[width=0.75\columnwidth]{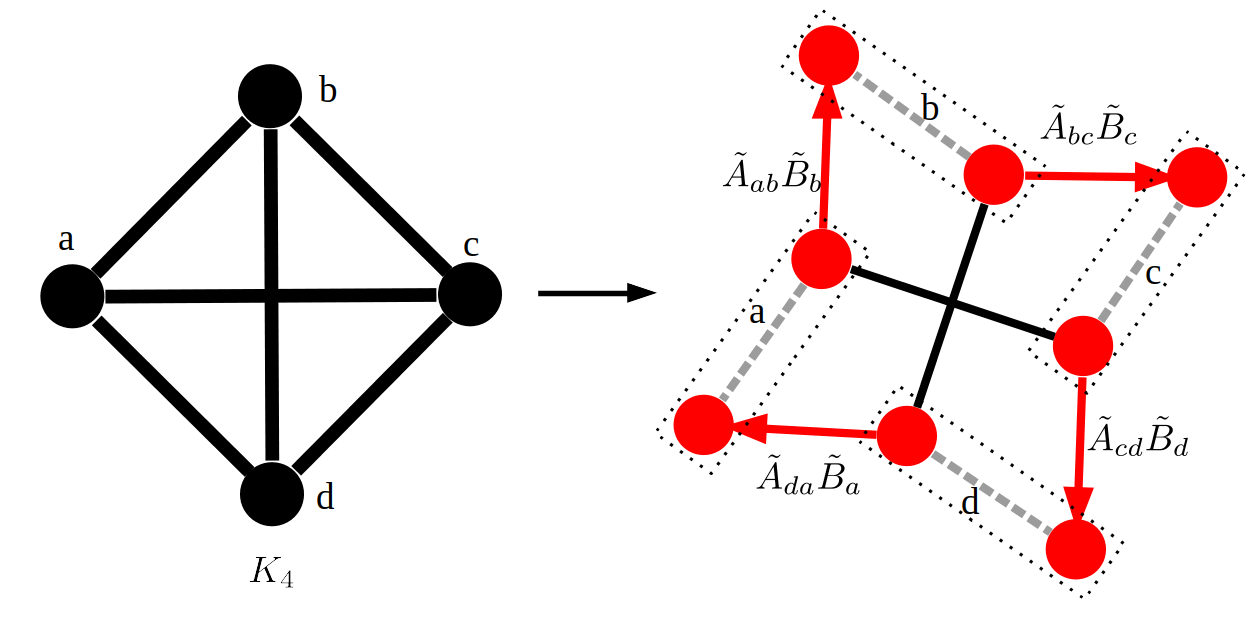}
    \caption{A minimal example of $K_4$ that shows a complete system graph can have its edges enumerated such that a Hamiltonian cycle of two-mode interactions can be implemented simultaneously via strong coloring. The dashed lines denote internal edges, red denotes active qubits, and red arrows from $u$ to $v$ for $u,v\in V_\Sigma$ denote the implementation of an interaction of the type $\tilde{A}_{uv}\tilde B_{v}$.}
    \label{fig:hamiltoniancycle}
\end{figure}

Let us consider odd $N$ first. Once the interactions contained in all $(N-1)/2$ edge-disjoint Hamiltonian cycles of $K_N$ are implemented, exactly half of the two-mode interactions are completed in $(N-1)/2$ steps and each edge is traversed exactly once. Considering the rest of the interactions in terms of weak coloring, the problem can be reduced to edge coloring a complete graph as described above. The only difference is that one no longer has parallel edges to consider since one of the two interactions along every edge $e\in E_\Sigma$ is already implemented. This gives
\begin{equation}
    \chi_\mathrm{strong}(K_N, N\,\mathrm{odd})\leq \frac{3N-1}{2}.
\end{equation}

For even $N$, once the interactions contained in all $(N-2)/2$ edge-disjoint Hamiltonian cycles of $K_N$ are implemented, one is still left with some parallel edges in the edge-coloring formulation of the weak coloring problem. The extra parallel edges form a perfect matching\footnote{A perfect matching is a set of pairwise non-adjacent edges that cover every vertex of the graph}~\cite{bryant2007cycle, alspach2008wonderful} and therefore these ``extra'' interactions can be colored with a single additional color. The problem now reduces to the no-parallel edges version of the edge coloring problem on a complete graph, yielding a final upper bound of
\begin{equation}\label{eq:completestrongupper}
    \chi_\mathrm{strong}(K_N, N\,\mathrm{odd})\leq \frac{3N}{2}.
\end{equation}

Given this explicit construction, the combined asymptotic bounds on the improvement from strong coloring over weak coloring is
\begin{equation}
  \frac{3}{2} \lesssim \frac{\chi_\mathrm{weak}(K_N)}{\chi_\mathrm{strong}(K_N)}\lesssim 2.
\end{equation}

\subsubsection{Separating Weak and Strong Coloring} \label{s:separation}
For the star-graph and the complete-graph examples, it was observed that asymptotically (in $N$) $\frac{\chi_\mathrm{weak}}{\chi_\mathrm{strong}}\lesssim 2$. 

As it will be seen numerically in Sec.~\ref{s:numericalresults}, such constant-factor improvements are typical for system graphs that arise from realistic qubit architectures. In the near term, eliminating such constant overheads in circuit depth is important and serves as one of the practical motivations for this work, but a more significant separation in parallelization performance between weak and strong coloring can be demonstrated. This example, while contrived, serves to show that polynomial separations are possible and, perhaps more importantly, highlights a key feature of system graphs which allow for a large separation between weak and strong coloring. 

In particular, a necessary condition for a large improvement due to strong coloring is that the edge bottleneck(s) of the system graph for routing paths in $\tilde{\mathcal{T}}$ are significantly larger than the vertex bottleneck(s). The reason for this is clear: vertex bottlenecks are the limiting factor on parallelization for the weak coloring problem, whereas edge bottlenecks are the limiting factor for strong coloring. That is, the more vertices (edges) to route paths through in the weak (strong) coloring problems, the more room there is for parallelization. When there is a large separation between the size of the edge and vertex bottlenecks, strong coloring necessarily provides more of an advantage. The star graph is a simple example of such a large separation between edge and vertex bottlenecks. The vertex bottleneck is a single vertex, but many edges enter this vertex, suggesting a large potential improvement via strong coloring. We know analytically that this improvement is asymptotically a factor of two. Despite the large separation between edge and vertex bottlenecks, most interactions passing through the central vertex still require many of the internal qubits, hence limiting a greater potential for strong coloring. The counterexample constructed below aims to avoid this limitation.

For simplicity, let us consider $N=4m$ for some positive integer $m$. Divide the physical vertices of $\Sigma$ corresponding to these $N$ fermionic modes into two disjoint sets $T_1, T_2$, each of size $N/2$. Consider adding edges to the system graph such that both $T_1$ and $T_2$ induce disjoint complete subgraphs. 
Now, consider adding $N/2$ virtual vertices to each of these complete subgraphs. Add an edge from each of these virtual vertices to all physical vertices in the subgraph, forming two bipartite subgraphs. Finally, add a single additional virtual vertex and join it to all other virtual vertices. See Fig.~\ref{fig:polyexample} for an example of the construction for $N=8$.

\begin{figure}
    \centering
    \includegraphics[width=0.725\columnwidth]{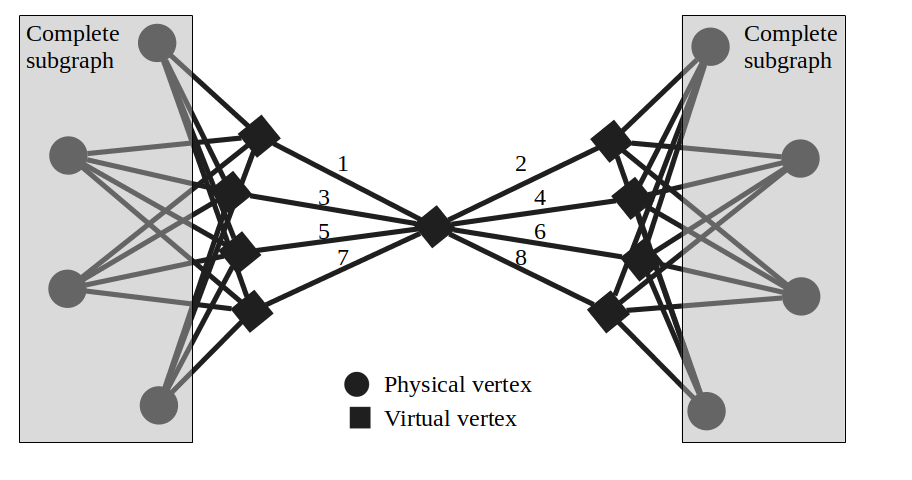}
    \caption{An example of the system graph $\Sigma_\text{bottleneck}$ that allows for a linear-in-$N$ scaling of the ratio of the chromatic numbers of the conflict graphs with weak and strong coloring. 
    The physical vertices are divided into two complete subgraphs $K_{N/2}$ separated by a single vertex bottleneck, which allows only one interaction at a time between the two subgraphs in the weak coloring problem. For visual clarity, the edges between vertices of the complete subgraphs are not shown. For strong coloring, the extra layer of virtual vertices between each subgraph and the central bottleneck vertex allows any disjoint set of $\frac{N}{2}$ interactions between the two subgraphs to be implemented in a single step.}.
    \label{fig:polyexample}
\end{figure}

Call this system graph $\Sigma_\text{bottleneck}$. $\Sigma_\text{bottleneck}$ has a single vertex bottleneck between its two symmetric halves. The weak coloring chromatic number can computed to be
\begin{equation}
    \chi_\mathrm{weak}(\Pi(\Sigma_\text{bottleneck}))= \underbrace{2\left(\frac{N}{2}\right)^2}_{\textrm{\shortstack{paths between\\ $T_1, T_2$}}}+\underbrace{N-1.}_{\textrm{\shortstack{paths within \\ $T_1, T_2$}}}
\end{equation}
Now consider enumerating the edges of the central virtual vertex so that all edges going to one half of the graph are even integers and all edges going to the other half are odd integers. With this labeling, for any choice of $N/2$ interactions from one half of the graph to the other, one can route all $N/2$ interactions through the central vertex simultaneously using strong coloring. Applying only weak coloring to implement interactions within each complete subgraphs then gives
\begin{equation}
    \chi_\mathrm{strong}(\Pi(\Sigma_\text{bottleneck}))\leq \underbrace{N}_{\textrm{\shortstack{paths between\\ $T_1, T_2$}}}+\underbrace{N-1}_{\textrm{\shortstack{paths within \\ $T_1, T_2$}}}=2N-1,
\end{equation}
which yields $\frac{\chi_\mathrm{weak}(\Pi(\Sigma_\text{bottleneck}))}{\chi_\mathrm{strong}(\Pi(\Sigma_\text{bottleneck}))}\gtrsim \frac{N}{4}$. Such linear-in-$N$ improvement from strong coloring is the best possible scaling for this ratio as the separation between sequential implementation of all interaction terms and the best possible parallel scheme is $\sim N$.

One should be cautious in interpreting this large separation. In practical settings, intelligent design of system graphs from the underlying qubit architecture will rule out such large separations. In practice, more modest, but important, constant-factor improvements between strong and weak coloring should be expected. In particular, there is no reason why the qubits in the central virtual vertex of this example should all be grouped into one system-graph vertex---there are no interactions that require more than a single qubit operator within this vertex. Therefore, a more intelligent system graph built on the same underlying qubit structure would afford the weak coloring problem access to the same performance as the strong coloring problem in our contrived example, by splitting the central virtual vertex into $N/2$ vertices. 

Consequently, this example also raises the issue of intelligent system-graph design as a prerequisite to using our algorithms to greatest effect. In Sec.~\ref{s:numericalresults}, we give some more examples of how system graphs may be constructed from the underlying qubit architectures.

\section{Numerical Results}\label{s:numericalresults}
\subsection{Description of the Algorithm}
For more complicated examples, it is necessary to turn to heuristic algorithms to find good solutions to the parallelization problem in either the weak or strong coloring schemes. Any such algorithm must perform the following steps: First, it must identify a set $\mathcal{P}$ of paths between interacting vertices. Then, given $\mathcal{P}$, it must construct the corresponding conflict graph, which---in the case of strong coloring---requires a choice of enumerating the edges of the system graph. It is known from our analytic results that this choice of enumeration can have a significant impact. Finally, the algorithm must perform a vertex coloring on the resulting conflict graph, which is well-known to be an NP-hard problem in its own right. 

Our algorithmic approach to these problems is largely a straightforward one. The most essential and novel aspect of the algorithm relates to choosing the enumeration of edges in an intelligent way to amplify the improvement from strong coloring over weak coloring as much as possible. This is important because it is this step that allows one to parallelize simulation of fermionic Hamiltonians in a way that is aware of the fermion-to-qubit mapping chosen. An understanding of the advantages afforded by taking this information into account is one of the primary goals of this paper. An overview of the salient features of the algorithm is provided here, and the reader is referred to the $\mathtt{github}$ repository for access to the full code~\cite{github}.

In either the weak or strong coloring case, to find a path set $\mathcal{P}$, one can begin by weighting the edges of $\Sigma$ to penalize edges that connect to physical vertices since physical vertices are used in physical interactions and may need to be saved for the implementation of other terms that involve them. The exact amount of this penalty is a free parameter of the algorithm. The larger the penalty, the more the algorithm will prioritize potential for parallelization over minimizing the Pauli weight of operators. This is because the use of physical vertices may provide shorter paths and, hence, shorter Pauli strings but those are penalized by the algorithm.

Next the algorithm needs to choose a random ordering of the interaction set $\tilde{\mathcal{T}}$. Given this ordering, for each interaction $\tau\in\tilde{\mathcal{T}}$, the algorithm identifies a path as the shortest distance, weighted path through $\Sigma$ connecting the relevant vertices. This can be done efficiently in time $\Theta\big((|V_\Sigma|+|E_\Sigma|)\log |V_\Sigma|\big)$ via Dijkstra's algorithm \cite{mehlhorn2008algorithms}. Next, the weight of all edges used in this path are increased and the algorithm proceeds to finding the shortest path for the next $\tau\in\tilde{\mathcal{T}}$ on the reweighted graph. The increase in the weight of the used vertices penalizes paths that do not find ``new'' routes through $\Sigma$---this is advantageous since paths that traverse the same edge in $\Sigma$ are guaranteed to conflict. The exact choice for this penalty is, again, a free parameter of the algorithm. 
Due to the sequential nature of this algorithm and the penalties for traversing previously used edges, different orderings of $\tilde{\mathcal{T}}$ will produce different path sets $\mathcal{P}$. Consequently, one needs to run the algorithm many times to sample a variety of different path sets.

Once $\mathcal{P}$ is generated, the algorithm go on to construct the corresponding conflict graph $\Pi(\mathcal{P})$. For weak coloring, this is straightforward---if two paths $p,q\in \mathcal{P}$ share any vertices, the corresponding vertices in $\Pi(\mathcal{P})$ share an edge. For strong coloring, whether or not $p,q\in \mathcal{P}$ conflict depends on the choice of edge enumeration for the vertices in the paths. This choice of enumeration is arbitrary, so a wise choice is an enumeration that attempts to minimize conflicts. In particular, whenever the algorithms finds a path $p\in\mathcal{P}$, it loops through the vertices $u\in p$ and enumerates any previously unenumerated edges according to the following rules: If $v$ is the first vertex in the path and therefore acted on by an operator of type $\tilde{A}_{ux}$, it enumerates the outgoing edge with the smallest available index. If $u$ is an interior vertex along the path and therefore acted on by an operator of type $\tilde{A}_{xu}\tilde{A}_{uy}$, the algorithm enumerates the incoming vertex as the smallest available index and the outgoing edge as the next smallest available index. Finally, if $u$ is the final vertex in the path and therefore acted on by an operator of type $\tilde A_{xu}\tilde B_u$, the algorithm enumerates the incoming edge with the largest available index. This method of constructing a choice of enumeration follows directly from Tab.~\ref{tab:strongcoloringrules} and minimizes conflicts between paths in a greedy manner. Like the determination of the paths, the outcome of this greedy approach depends on the initial ordering of the interaction set, so, again, it is advantageous to run the algorithm many times.

With $\Pi(\mathcal{P})$ in hand, the algorithm must vertex color it to solve the weak or strong coloring problem, that is to determine the number of steps required to implement the interactions in $\tilde{\mathcal{T}}$. In particular, one can make use of a greedy coloring algorithm as shown in Algorithm~\ref{alg:greedycoloring}. The greedy coloring algorithm is guaranteed to satisfy the bound in Eq.~(\ref{eq:greedybnd}), but its performance can be much better depending on the ordering of vertices. We take a standard approach of a largest-first ordering, where the vertices are ordered from the largest to the smallest degree~\cite{welsh1967upper}. For vertices of the same degree, the ordering is random as determined by the order of the initial randomized interaction list. This largest-first approach often works well in practice, but it is only one option among many~\cite{husfeldt2015graph}.  

\subsection{Analytically Solved Examples Revisited}
We now revisit the analytically solved examples of Sec.~\ref{s:analyticresults}. Using the star graph and complete graph as examples, the algorithm described above is tested for a variety of $N$ ranging from 3 to 35. 1000 different random orderings of the interaction list for each $N$ are considered. As the algorithm is deterministic once a choice of such an ordering is made, this corresponds to 1000 runs of the algorithm. The results are shown in Fig.~\ref{fig:numerics_star_complete}.

For the star graph, the algorithm numerically obtains the true chromatic number for both the weak and strong coloring problem for every ordering of the interaction list. For the weak coloring problem, this is because the conflict graph is well-colored~\footnote{Well-colored graphs are those such that all vertex orderings produce the same number of colors for a greedy coloring.}. In particular, the conflict graph is a co-graph---a class whose members are known to be well-colored~\cite{brandstadt1999graph}. To demonstrate that a graph is a co-graph, it is sufficient to show that it has no length-4 paths as induced subgraphs. Such subgraphs do not exist for $\Pi_\mathrm{weak}(S_N)$. Every vertex $u\in V_{\Pi_\mathrm{weak}}$ is either a member of a complete subgraph $K_{N(N-1)/2}$ consisting of all vertices whose interactions correspond to two-mode interactions or its only neighbors are all contained in such a complete subgraph. Therefore, there exists no set of four vertices whose induced subgraph is a path. While the strong coloring conflict graph is no longer well-colored, the largest-first vertex ordering ensures successful greedy coloring for all interaction-list orderings.

On the other hand, for the complete graph, the algorithm fails to always produce colorings that fully achieve the analytic results. This is because the greedy coloring of the resulting conflict graphs depends heavily on the vertex ordering. Achieving the optimal coloring requires a highly fine-tuned construction. Therefore, the generic randomized greedy coloring algorithm is unlikely to obtain such a coloring as $N$ grows large.
Despite these challenges, the coloring algorithm provides almost optimal results for the complete graph for the graph sizes considered. 

Fortunately for this algorithm, many realistic architectures are expected to result in system graphs that are limited by vertex bottlenecks, given practical limitations on the high qubit connectivity required for producing input/output-limited system graphs like the complete graph.\footnote{Trapped-ion systems are an exception as they provide all-to-all interactions among pairs of qubits.} In the next section, the weak and strong coloring problems will be investigated on two system graphs designed from such realistic architectures.
\begin{figure}
    \centering
    \includegraphics[width=1.0\columnwidth]{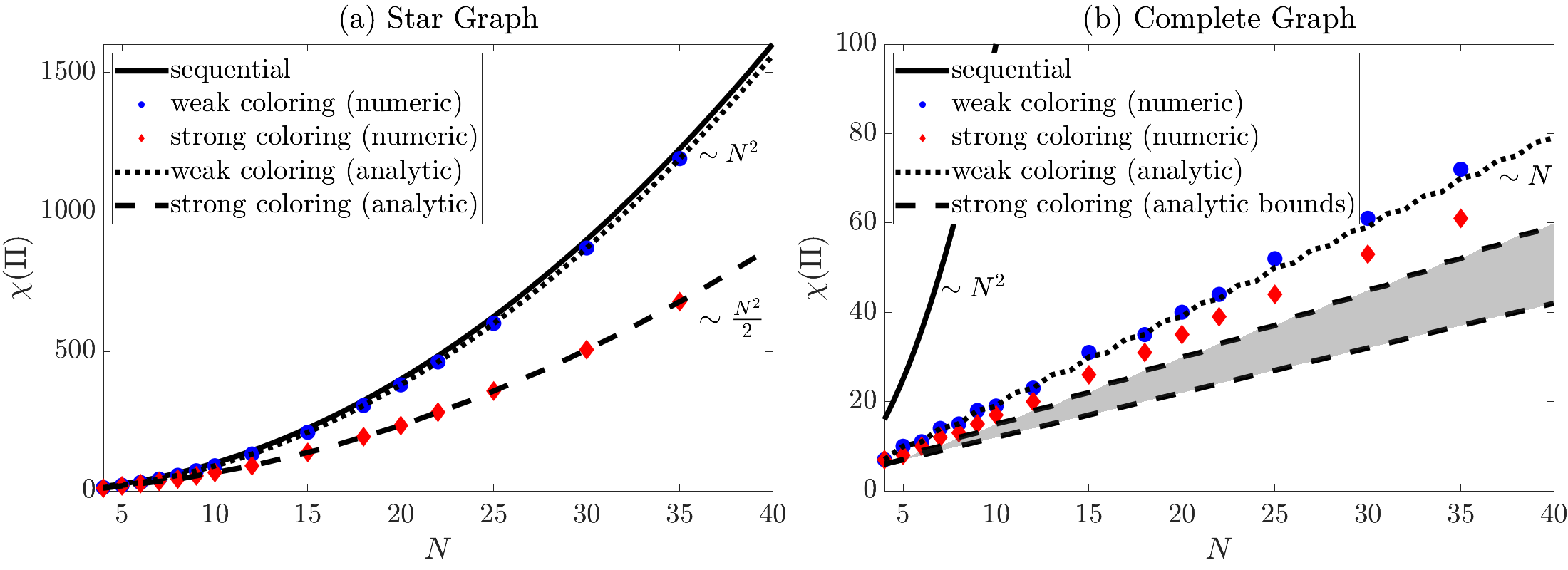}
    \caption{Numerical results for the chromatic number from the weak (blue circles) and strong coloring (red diamonds) problems for (a) star and (b) complete system graphs with $N$ physical fermionic modes. For the non-asymptotic analytic results, see Eqs.~(\ref{eq:weakstar})-(\ref{eq:strongstarNodd}) and Eq.~(\ref{eq:completeweak}), respectively. For the complete graph, the numerical algorithm fails to achieve the analytically determined bounds as obtaining these results requires a highly fine-tuned vertex ordering for the greedy coloring algorithm on the corresponding conflict graph.}
    \label{fig:numerics_star_complete}
\end{figure}

\subsection{Current Architectures}
The algorithm developed in this section can be applied to system graphs designed on examples of realistic superconducting-qubit architectures. Quantum processors built from superconducting qubits have limited connectivity and thus stand the most to gain from optimized parallelizations. The first example to be studied is a heavy-hexagon qubit architecture as used by many of IBM's quantum processors~\cite{IBM}. This architecture has been shown to be favorable for reducing cross-talk and frequency collisions, while allowing for error correction via a hybrid surface and Bacon-Shor code~\cite{chamberland2020topological}. The second example is a square-lattice qubit architecture similar to that used by Google's Sycamore chip~\cite{arute2019quantum}. 

Importantly, qubit architectures are distinct from the system graphs one creates on them. While the qubit architecture places constraints on the design of a system graph, one is free to create many different system graphs on a given quantum processor. In practice, this design problem can be viewed as one of optimizing the limited resources of a particular quantum processor---number of qubits, qubit connectivity, circuit depth---to extract a quality simulation of the largest possible system of fermions. Observe that at the cost of a large circuit depth and high Pauli weight operators, a simple Jordan-Wigner transformation (in the form of a system graph which is a line graph) allows one to simulate the most fermions, as no ancilla qubits are needed. 

One approach to reduce circuit depths and high Pauli-weight operators is to consider more general system graphs. Here, qubit connectivity is a key limitation on designing efficient system graphs if one wants to avoid the need for SWAP operations in the circuit decomposition of the Hamiltonian-simulation algorithm.
In particular, it is desirable to design a system graph so that 1) any qubits that make up a system-graph vertex have linear connectivity for the Jordan-Wigner encoding of the local Majoranas, and 2)
if a pair of vertices are adjacent in the system graph, the internal qubits associated with that edge are adjacent in the architecture graph.  

To apply the algorithm, let us limit ourselves to a single example of a system graph for each qubit architecture under consideration. In particular, in each case, a system with a total of $49$ fermionic modes will be considered. For the heavy hexagon architecture, the system graph considered is constructed from 65 qubits and mirrors the structure of the underlying qubits. This is identical to an example considered in Ref.~\cite{chien2020custom}. For the square lattice, the system graph considered is a triangular tiling of the Euclidean plane and requires 147 qubits. The precise mappings from architecture graphs to system graphs for each of these cases are shown in Fig.~\ref{fig:architecturetosystem}.
\begin{figure}
    \centering
    \includegraphics[width=0.95\columnwidth]{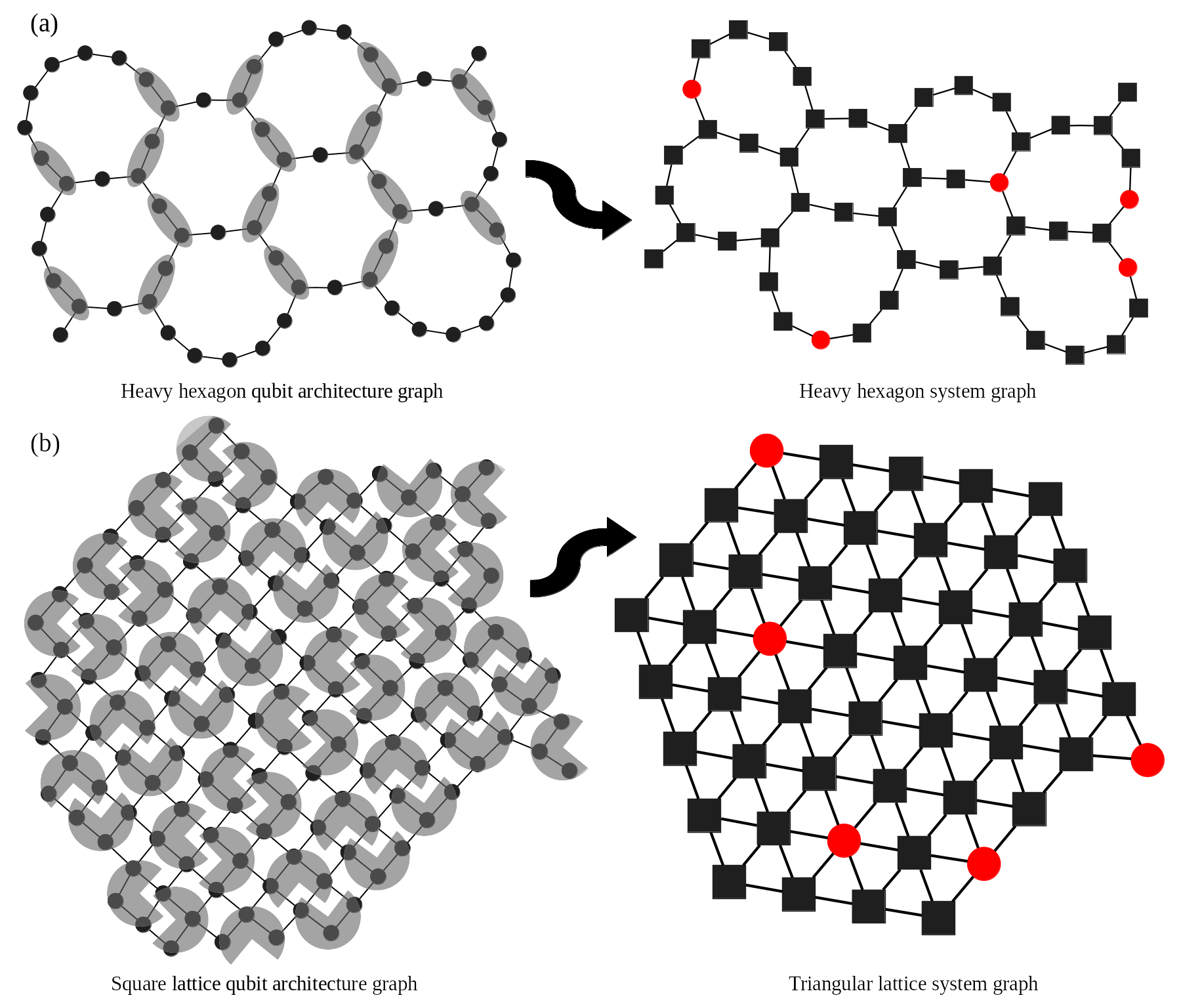}
    \caption{Mappings from architecture graphs to system graphs. On the architecture graph, qubits are represented by black dots and are grouped into system-graph vertices as denoted by the gray shading. On the system graphs, square vertices denote virtual vertices and red circular vertices denote physical vertices. For the numerics, $N$ physical vertices are chosen randomly from the 49 total fermionic modes.  }.
    \label{fig:architecturetosystem}
\end{figure}

Given these system graphs, $N$ of the $49$ vertices are randomly selected to be physical vertices for various $N$ between 5 and 35. For each $N$, 50 random choices of physical vertices are considered and on each instance, the algorithm is run for 1000 different random orderings of the interaction set for both the weak and strong coloring problems. The best solution from these 1000 different random orderings is then taken. These results, along with quadratic fits are shown in Fig.~\ref{fig:numerics_architectures}.
\begin{figure}
    \centering
    \includegraphics[width=0.95\columnwidth]{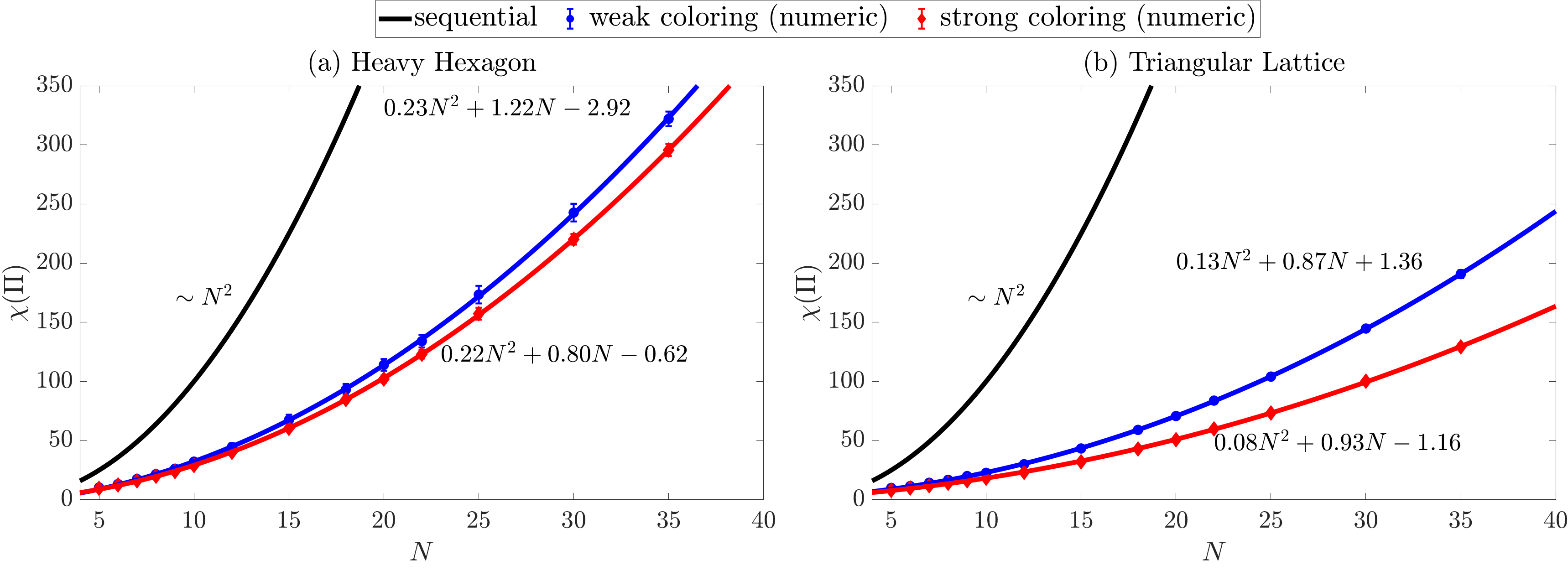}
    \caption{Numerical results for the chromatic number of the conflict graph as a function of the number of qubits $N$ for (a) the heavy-hexagon system graph and (b) the triangular-lattice system graph compared with the scaling of the sequential implementation of the Hamiltonian terms.}.
    \label{fig:numerics_architectures}
\end{figure}

As is seen from the plots, the improvement from strong coloring over weak coloring in the case of the heavy-hexagon graph is minimal compared to the improvement in the triangular lattice. This is consistent with the conclusions of Sect.~\ref{s:analyticresults}: strong coloring provides the higher performance benefit when the size of the edge bottlenecks to routing the paths induced by interactions are much larger than the size of vertex bottlenecks. The triangular lattice has many more edges per vertex (and correspondingly more qubits) which enable greater parallelization via strong coloring.

Figure~\ref{fig:numerics_ster} shows tradeoffs between the number of qubits and the degree of parallelization for the various examples considered in this work: the complete graph, the star graph, the heavy-hexagon graph, and the triangular lattice. While the best balancing of these various tradeoffs depends on many variables, the triangular lattice serves as a particularly nice example of how a system graph on a realistic architecture can be subject to significant reductions in circuit depth via parallelization. While weak coloring alone offers significant performance gains over a sequential approach, considering the precise details of this mapping via the strong coloring problem is important for minimizing the circuit depth. 
\begin{figure}
    \centering
    \includegraphics[width=\columnwidth]{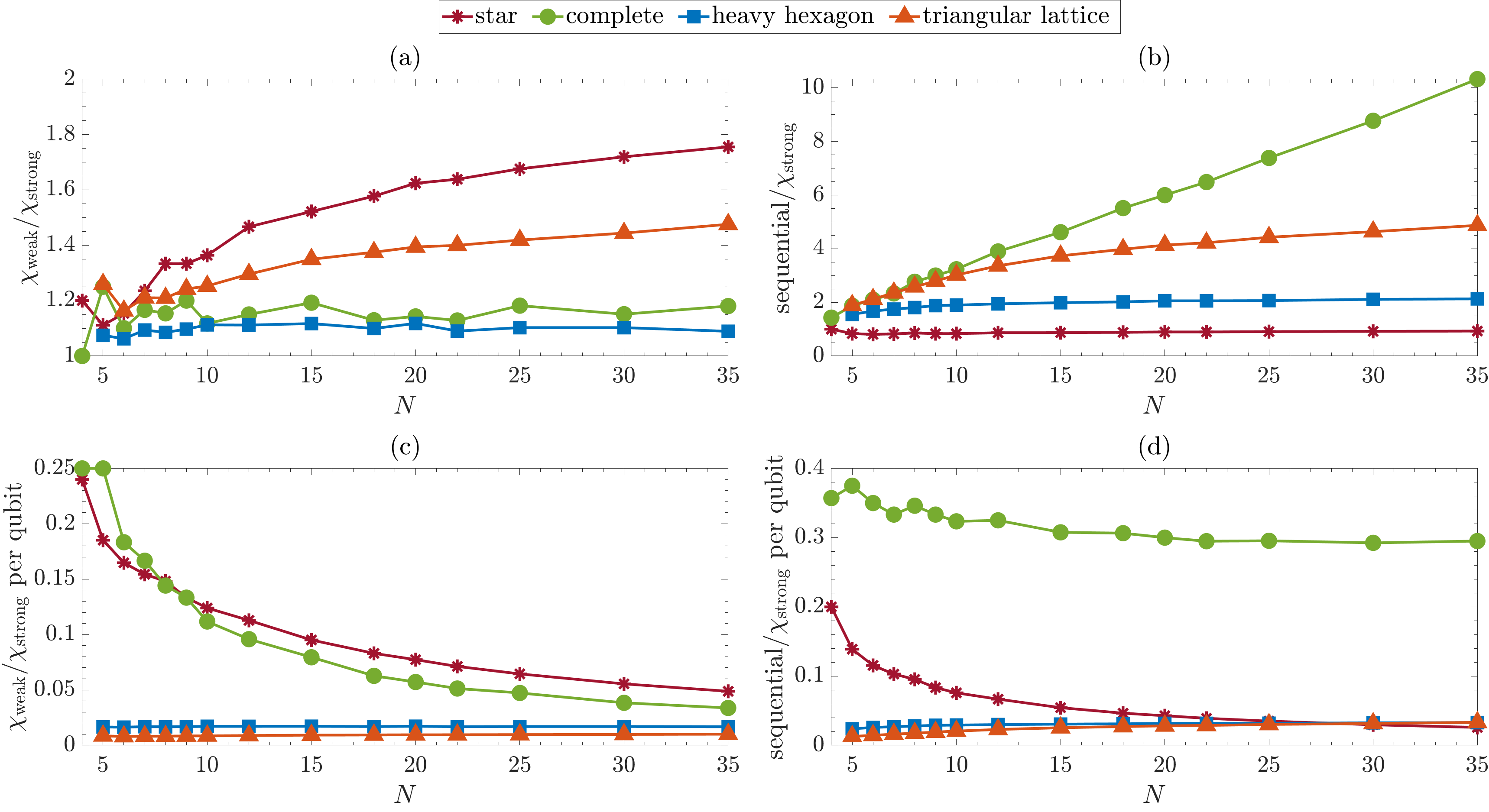}
    \caption{Numerical results on the amount of parallelization for various system graphs. (a) shows the improvement in the number of steps for strong coloring versus weak coloring. (b) shows the improvement for strong coloring over a sequential implementation of the interactions in the interaction list. (c) and (d) show the same as (a) and (b), respectively, but weighted by the number of qubits in the system graph. When comparing the different system graphs, recall the size of the complete graph and star graph scale with the number of physical vertices $N$ whereas the heavy hexagon and triangular lattice do not.}
    \label{fig:numerics_ster}
\end{figure}

\subsection{Local Interactions}\label{sec:NN-Hamiltonian}
A more common fermionic interaction term is nearest-neighbor hopping on a lattice. The parallelization of this work within the framework of custom fermionic codes can be applied to optimize simulating this model as well. Consider an interaction graph in the form of a two-dimensional square lattice of physical fermionic modes with nearest-neighbor hopping and open boundary conditions,
\begin{equation}
    H = \sum_{\langle u,v\rangle} \kappa_{uv} a^\dagger_u a_v + \sum_u \kappa_{uu} a^\dagger_u a_u,
\end{equation}
for real $\kappa_{uv}=\kappa_{vu}$, where the sum is over neighbors on the square lattice. Minimizing the Pauli weight and maximizing the parallelization of such nearest-neighbor hopping terms is the limiting algorithmic factor for a variety of models of interest, such as the spinless Fermi-Hubbard model on the square lattice. This problem is well-understood analytically for a variety of specific fermion-to-qubit mappings~\cite{verstraete2005mapping, steudtner2019quantum, derby2021compact}. While such analytic approaches to specific problems are valuable when tractable, the techniques of this work allow for an automated optimization for arbitrary Hamiltonians and arbitrary system graphs. 

We consider this problem for three different system graphs. The first case is a system graph identical to the interaction graph---a two-dimensional square lattice with all physical fermionic modes. This case can be directly compared with previous work on this problem. The other two cases involve placing the physical fermionic modes in the heavy-hexagon and triangular-lattice system graphs considered above (see Fig.~\ref{fig:architecturetosystem}). The physical modes are embedded such that nearest neighbors on the square-lattice interaction graph are as close as possible on the system graph, enabling low-weight Pauli strings. The precise details of this mapping are included with the source code as supplemental material~\cite{github}.

The results for weak and strong coloring on each of these system graphs for a range of interaction-graph lattice sizes are shown in Fig.~\ref{fig:squarelattice}. For each lattice size $L\times L$ for $L\in\{4,9,16,25,36\}$, 1000 random orders for the interaction list are considered for both weak and strong coloring. Observe that compared to non-local models, the advantage over weak coloring due to strong coloring is minimal, independent of system graph. This is expected since all path lengths are short in this model and, similar to the complete-graph example, parallelization is limited by the input/output capacity of the physical vertices. However, parallelization provides significant gains over a naive sequential strategy which scales as $\sim 4N$.

In agreement with previous work, an $O(1)$ circuit depth is obtained with increasing lattice size for the square-lattice interaction graph, and the local fermionic interactions are mapped to local qubit interactions. The triangular lattice performs similarly although it allows for slightly improved performance, especially when small numbers of physical fermions are embedded in the system graph. This is because the triangular lattice offers more paths for implementing interactions than the square lattice. The spike in chromatic number for $N=36$ in the triangular lattice is because a $6\times 6$ square lattice cannot quite fit in the triangular-lattice system graph considered. Therefore, some interactions that are local in the interaction graph become non-local in the system graph. This effect is even more pronounced for the heavy-hexagon system graph which has lower connectivity than the interaction graph, and therefore cannot perform as well as the other system graphs even for small system sizes. 
\begin{figure}
    \centering
    \includegraphics[scale=0.28]{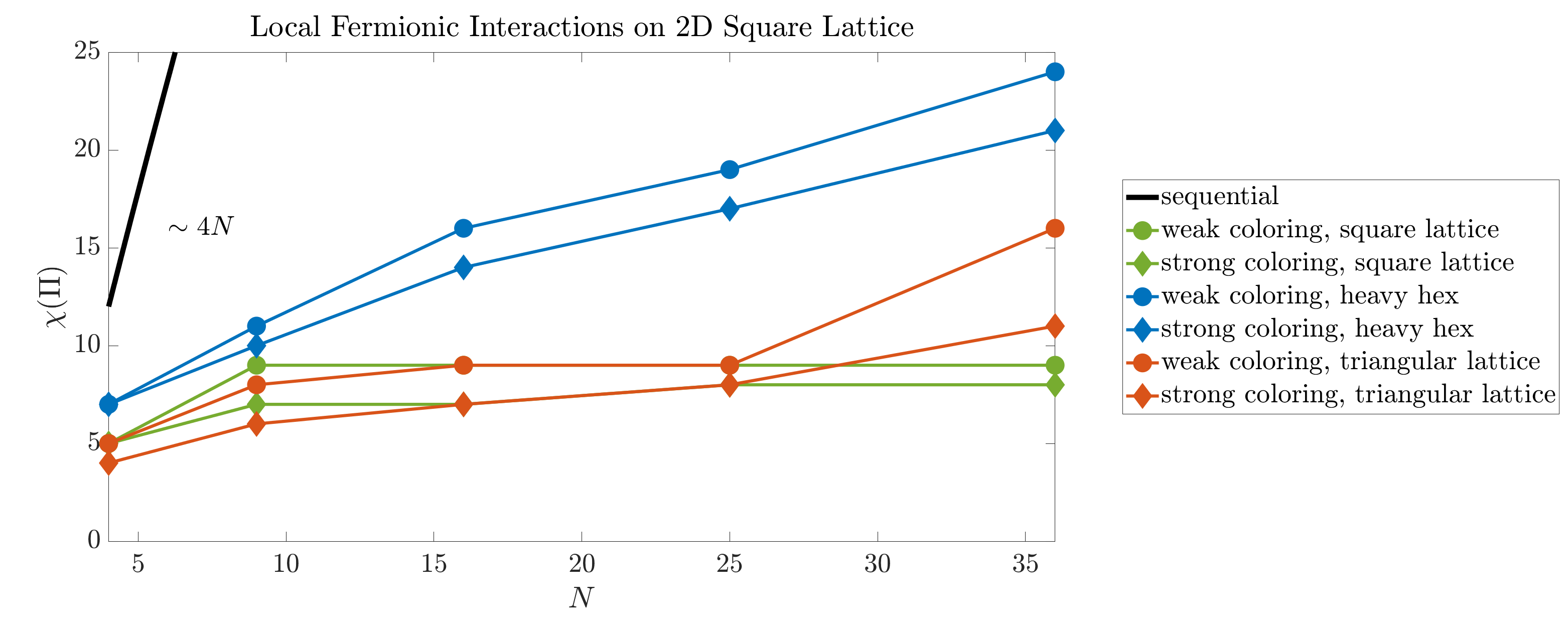}
    \caption{Numerical results for the chromatic number of the conflict graph of the system graphs noted corresponding to a square lattice with nearest-neighbor hopping for weak and strong coloring problems.}
    \label{fig:squarelattice}
\end{figure}

\section{Conclusion and Outlook}\label{s:conclusion}
The amount of parallelization afforded by a system graph is an important target for optimization in the quantum simulation of fermionic Hamiltonians on near-term quantum processors where circuit depth is expected to be an important limiting factor. In this work, this problem is mapped to a graph coloring problem and the relationship between parallelization and the system-graph structure for a variety of representative examples are explored both analytically and numerically. It is found that by considering the details of the fermion-to-qubit mapping, that is to seek strong coloring, one can often find constant-factor improvements in parallelizability relative to performing only weak coloring which is a more high-level approach. Both approaches enable significant reductions in circuit depth relative to a naive sequential approach. The amount of improvement of both coloring schemes compared with the sequential approach, and the strong versus weak coloring is a function of system-graph characteristics---for instance, the number of and the severity of system graph's vertex and edge bottlenecks---as well as on the choice of enumerating edges in the system graph.

A full account of the algorithmic costs for a Hamiltonian of interest would incorporate the algorithms for parallelization presented here to design a fermion-to-qubit mapping that respects hardware-specific constraints, such as qubit connectivity, noise tolerance, and implementable circuit depths. This work considers only one approach to parallelizability offered at the level of the number of steps needed to implement the Pauli strings that result from a custom fermionic code. When attempting to fully optimize a simulation algorithm in an architecture-aware manner, our approach may further be combined with other parallelization schemes, e.g., those based on fermionic SWAP networks~\cite{babbush2017low, kivlichan2018quantum, cade2020strategies} or approaches that concern fine-grained details of the circuit decomposition when the Pauli strings are compiled to basic two-qubit entangling gates~\cite{hastings2014improving}. 
The strong-coloring problem, in particular, depends heavily on the choice of encoding of the local Majoranas. While this work only considers Jordan-Wigner encoding of the local Majoranas, it is known that other choices (e.g. Fenwick-tree encoding~\cite{bravyi2002fermionic,havlicek2017operator,setia2019superfast}) lead to lower Pauli weights for the local operators---potentially at the cost of reducing the possibility of parallelization via strong coloring. 
In fact, local Majoranas could be encoded differently on different sites to perform a full optimization at the circuit level. The problem of detailing the strong coloring rules for other (possibly mixed) choices of encoding the local Majoranas is left to for future studies. 

The custom fermionic codes considered in this work, and the generalizations described above, encompass a broad range of mappings.
Nonetheless, these do not exhaust the possibilities for mapping fermions to qubits. 
Consequently, one can imagine profitably mapping parallelization tasks to graph coloring for other encodings as well. 
For instance, while weak coloring allows one to parallelize a Jordan-Wigner encoding (contained in the class of encodings of this work as a system graph consisting of a line of vertices), another ancilla-free mapping, the Bravyi-Kitaev encoding~\cite{bravyi2002fermionic}, does not allow for this sort of improvement. This is because the structure of the Bravyi-Kitaev encoding is given by a Fenwick tree~\cite{havlicek2017operator}, where the root qubit of the tree is non-trivially acted on for every operation, preventing parallelization of the sort we consider here.

We anticipate that applying our tool-set for analyzing parallelizability for Hamiltonian simulation in conjunction with architectural considerations will be useful for obtaining detailed simulation costs for other fermionic Hamiltonians not studied in this work. For example, local and non-local interactions involving four fermionic operators (e.g., Coulomb interactions in quantum chemistry and two-nucleon interactions in nuclear physics) and interactions involving more fermions (such as three- and higher-body interactions in nuclear physics~\cite{machleidt2011chiral}) can be incorporated in the parallelization scheme of this work, and lead to improved simulations in the near and far term. In another interesting direction, one may consider applying the strategy of this work in designing parallelized simulation steps in connection to system graph and hardware connectivity to interacting systems of fermions and bosons, such as those of relevance to lattice gauge theories~\cite{byrnes2006simulating,lamm2019general,shaw2020quantum,paulson2021simulating,ciavarella2021trailhead,kan2021lattice}. For example, it would be interesting to thoroughly examine the simulation cost, considering parallelization potential, of fully fermionic formulations (that can be achieved only in 1+1 dimensions~\cite{hamer1997series}) and fully bosonic formulations (that can be achieved for certain gauge theories~\cite{zohar2019removing}) of a lattice gauge theory~\cite{davoudi2021search}. Finally, in designing system graphs, one may need to take into consideration the entanglement structure (see e.g., Ref.~\cite{klco2021geometric} for a discussion in the context of quantum fields) of the resulting subgraphs and the associated computational complexity of various simulation steps, such as state preparation, that is closely tied to entanglement properties.

\begin{acknowledgements}
We thank  Aniruddha Bapat,  Yu-An Chen, Michael Jarret, Alexander Schuckert, and James Watson for valuable discussions. J.B.~acknowledges support by the U.S.~Department of Energy, Office of Science, Office of Advanced Scientific Computing Research, Department of Energy Computational Science Graduate Fellowship under award No.~DE-SC0019323 and funding by the DoE ASCR Accelerated Research in Quantum Computing program (award No. DE-SC0020312), DoE QSA, NSF QLCI (award No. OMA-2120757), DoE ASCR Quantum Testbed Pathfinder program (award No. DE-SC0019040), NSF PFCQC program, AFOSR, ARO MURI, AFOSR MURI, and DARPA SAVaNT ADVENT. Z.D. acknowledges support by the U.S. Department of Energy, Office of Science, Early
Career Award under award no. DE-SC0020271. This research was supported in part by the National Science Foundation under Grant No. NSF PHY-1748958, the Heising-Simons Foundation, and the Simons Foundation (216179, LB).
\end{acknowledgements}

\bibliography{main.bib}

\end{document}